\theoremstyle{plain}
\newtheorem{theorem}{Theorem}[section]
\theoremstyle{definition}
\newtheorem{definition}[theorem]{Definition}
\newtheorem{remark}[theorem]{Remark}
\def\@setthanks{\vspace{-\baselineskip}\def\thanks##1{\@par##1\@addpunct.}\thankses}
\begin{document}

\renewcommand{\abstractname}{Abstract}
\renewcommand{\refname}{References}
\renewcommand{\tablename}{Table}
\renewcommand{\figurename}{Figure}
\renewcommand{\proofname}{Proof}

\title[Euler's two-center problem]{Scattering invariants in Euler's two-center problem}

\author[N. Martynchuk]{N. Martynchuk$^1$}

\author[H. R. Dullin]{H. R. Dullin$^2$}

\author[K. Efstathiou]{K. Efstathiou$^1$}

\author[H. Waalkens]{H. Waalkens$^1$}

\thanks{ $^1$ Johann Bernoulli Institute for Mathematics and Computer Science,
  University of Groningen, P.O. Box 407, 9700 AK Groningen, The Netherlands}
\thanks{ $^2$ School of Mathematics and Statistics, The University of Sydney, Sydney, NSW 2006, Australia}

%\email{N.Martynchuk@rug.nl, K.Efstathiou@rug.nl, H.Waalkens@rug.nl, Holger.Dullin@sydney.edu.au}

%\address{Johann Bernoulli Institute for Mathematics and Computer Science\\
%  University of Groningen\\ P.O. Box 407\\ 9700 AK Groningen\\ The Netherlands}
  
  \begin{abstract}
  The problem of two fixed centers was introduced by Euler  as early as in 1760. It plays an important role both in celestial mechanics and in the microscopic world. In the present paper we study the spatial 
  problem in the case of arbitrary  (both positive and negative)  strengths of the centers. Combining techniques from scattering theory and Liouville integrability, 
  we show that this spatial problem has
  topologically non-trivial scattering dynamics,  which we identify as scattering monodromy. 
  The approach that we introduce in this paper applies more generally
  to scattering systems that are integrable in the Liouville sense.\\
  %Our approach is based on techniques from the theory of  Liouville integrable systems and scattering theory. 
  \vspace{-2mm}
  \\
  Keywords: Action-angle coordinates; Hamiltonian systems; Liouville integrability; Scattering map; Scattering monodromy. 
\end{abstract}

%\keywords{Action-angle coordinates; Hamiltonian system; Liouville integrability; Scattering map; Scattering monodromy.}

%\subjclass[2010]{37J35, 70F99, 70H05}

\maketitle

%%%%%%%%%%%%%%%%%%%%%%%%%%%%%%%%%%%%%%%%%%%%%%%%%%%%%%%.       Introduction

\section{Introduction}

The problem of two fixed centers, also known as the Euler $3$-body problem, is one of the most fundamental integrable problems of classical mechanics. It 
describes the motion of a point particle in Euclidean space under the influence of the Newtonian force field 
$$F = - \nabla V, \ V = - \dfrac{\mu_1}{r_1} - \dfrac{\mu_2}{r_2}.$$
Here $r_i$ are the distances of the particle to the two fixed centers and $\mu_i$ are the strengths (the masses or the charges) of these centers. 
We note that the Kepler problem corresponds to the special cases when
the centers coincide or when one of the strengths is zero.

The (gravitational) Euler problem  was first studied by L. Euler in a series of works in the 1760s \cite{Euler1760, Euler1766, Euler1767}. He discovered that this problem is
integrable by putting the equations of motion in a separated form. Elliptic coordinates, which separate the problem and which are now commonly used, appeared in
his later paper \cite{Euler1767} and, at about the same time, in the work of Lagrange \cite{Lagrange1766-69}. 
The systematic use of
elliptic coordinates in classical mechanics was initiated by Jacobi, who used a more general form of these coordinates to integrate, among other systems, the geodesic flow on 
a triaxial ellipsoid; see \cite{Jacobi1884} for more details.

Since the early works of Euler and Lagrange  the Euler problem and its  generalizations have been studied by many authors. First classically and then,
since the works of Pauli \cite{Pauli1922} and Niessen \cite{Niessen1923} in the early 1920s, 
also in the setting of quantum mechanics.
We indicatively mention the works \cite{Charlier1902, Whittaker1917, Erikson1949, Deprit1962, Vosmischera2003, WaalkensDullinRichter04, Dullin2016, Seri2016}.
For a historical overview we refer to \cite{OMathuna2008, Gerasimov2007}.

In the present work we will be interested in the spatial Euler problem. 
For us, it will be important that this problem is a Hamiltonian system with two additional structures: it is a
\textit{scattering system} and
it is also \textit{integrable in the Liouville sense}. The structure of a scattering system comes from the fact that the potential  $$V(q) \to 0, \ \|q\| \to \infty,$$ 
decays at infinity sufficiently fast (is of \textit{long range}). 
It allows one to compare a given set of initial conditions  at $t = -\infty$ with the outcomes at $t = +\infty$. An introduction to the general theory of scattering systems can be found in \cite{Derezinski2013, Knauf2011}.
Liouville integrability comes from 
the fact that the system is \textit{separable};
the three commuting integrals of motion are:
\begin{itemize}
 \item the energy function --- the Hamiltonian,
 \item the separation constant; see Subsection~\ref{sec/separation_and_integrability},
 \item the component of the angular momentum about the axis connecting the two centers.
\end{itemize}
An introduction to the general theory of  Liouville integrable systems can be found in 
in \cite{Bolsinov2004, Cushman2015, Knauf2011}.

Separately these two structures of the Euler problem have been  discussed in the literature. Scattering has been studied, for instance,
in \cite{Klein2008, Seri2016}.  The corresponding 
Liouville fibration has been studied in \cite{WaalkensDullinRichter04} --- from the perspective 
of Fomenko theory \cite{Fomenko1990, Bolsinov2004}, action coordinates and \textit{Hamiltonian monodromy} \cite{Duistermaat1980}.
We will consider both of the structures together and show that the Euler  problem has non-trivial scattering invariants, which we will call \textit{purely scattering} and \textit{mixed}
\textit{scattering monodromy}, cf.
\cite{Knauf1999, Bates2007, DullinWaalkens2008, Martynchuk2016, Efstathiou2017}.
For completeness,
the qualitatively different case of Hamiltonian monodromy 
will be also discussed. 
We note that the approach that we introduce in the present paper applies more generally
  to systems that are both scattering and integrable in the Liouville sense.

The paper is organized as follows. The problem is introduced in Section~\ref{sec/preliminaries}. Bifurcation diagrams  are given in 
Section~\ref{sec/bifurcationdiagrams}. In  Section~\ref{sec/scattering} we discuss 
classical potential scattering theory. In Section~\ref{sec/scattering_in_integrable_systems} we adapt the discussion of Section~\ref{sec/scattering} to the context of scattering systems that are
integrable in the 
Liouville sense. In particular, we give a definition of a reference system for integrable systems. 
%For the two-center problem 
%it will turn that for each choice of the strengths $\mu_i$ at most two Kepler problems can serve as a reference system. 
We note that
the choice of a reference system is important for the definition of scattering monodromy; see Subsection~\ref{subsection/scattering_invariants}. 
For the Euler problem, scattering monodromy is discussed in detail in Section~\ref{sec/scattering_2cp}.
Hamiltonian monodromy  is addressed in Subsection~\ref{sec/topology}. The main part of the paper is concluded with a discussion in Section~\ref{sec/discussion}. 
Additional details are presented in the Appendix.

%%%%%%%%%%%%%%%%%%%%%%%%%%%%%%%%%%%%%%%%%%%%%%%%%%.   Preliminaries

\section{Preliminaries} \label{sec/preliminaries}

We start with the $3$-dimensional Euclidean space $\mathbb R^3$ and two distinct  points in this space, denoted by $o_1$ and $o_2.$ Let $q = (x,y,z)$ be Cartesian coordinates in $\mathbb R^3$ 
and let $p = (p_x,p_y,p_z)$ be the conjugate momenta in $T^{*}_q\mathbb R^3$.
The \textit{Euler two-center problem}
can be defined 
as a 
Hamiltonian system on $T^{*}(\mathbb R^3 \setminus \{o_1, o_2\})$ with a Hamiltonian function $H$ given by
\begin{equation} \label{eq/hamiltonian}
 H = \frac{\|p\|^2}{2} + V(q), \ \ V(q) = - \dfrac{\mu_1}{r_1} - \dfrac{\mu_2}{r_2},
\end{equation}
where $r_i \colon \mathbb R^3 \to \mathbb R$ is the distance to the center $o_i$. The strengths of the centers $\mu_i$ can be both positive and negative; without loss of generality
we assume that the center $o_1$ is stronger, that is,
$|\mu_2|\le |\mu_1|$.

\begin{remark}
When $\mu_i > 0$ (resp., $\mu_i < 0$) the center $o_i$ is attractive (resp., repulsive). 
The cases $\mu_1 \ne \mu_2 = 0$ and $\mu_2 \ne \mu_1 = 0$
correspond to a Kepler problem. In the 
case $\mu_1 = \mu_2 = 0$ the dynamics is trivial and we have the free motion $(t, q_0,p_0) \mapsto (q_0 + tp_0, p_0)$. 
\end{remark}

\subsection{Separation and integrability} \label{sec/separation_and_integrability}

 Without loss of generality we assume
$o_i = (0,0,(-1)^{i}a)$ for some $a > 0$, so that, in particular, the fixed centers $o_{1}$ and $o_2$ are located on the $z$-axis in the configuration space.
Rotations around the $z$-axis leave the potential function $V$ invariant. It follows that (the $z$-component of) the angular momentum
\begin{equation} \label{eq/momentum}
L_z = xp_y - yp_x
\end{equation}
commutes with $H$, that is, $L_z$ is a first integral. It is known \cite{Whittaker1917, Erikson1949} that there exists another 
first integral given by 
\begin{equation} \label{eq/thirdintegral}
G = H + \frac{1}{2}(L^2-a^2(p_x^2+p_y^2)) + a(z+a)\frac{\mu_1}{r_1}- a(z-a)\frac{\mu_2}{r_2}, 
\end{equation}
where $L^2 = L_x^2 + L_y^2 + L_z^2$ is the squared angular momentum.
%= \\ = \frac{1}{2}(L^2+p_z^2) + \frac{z}{r_1r_2}(\mu_1r_2-\mu_2r_1).
The expression for the integral $G$ can be obtained using separation in elliptic coordinates, as described below.
It will follow from the separation procedure that the function $G$  commutes both with $H$ and with $L_z$,  which means that the problem of two fixed centers 
is \textit{Liouville integrable}.

Consider prolate ellipsoidal coordinates 
$(\xi, \eta, \varphi)$:
\begin{equation} \label{equation/ellcoord}
\xi = \frac{1}{2a}(r_1+r_2), \ \ 
\eta = \frac{1}{2a}(r_1-r_2), \ \ 
\varphi = \textup{Arg}(x + iy).  
\end{equation}
Here $\xi \in [1, \infty), \, \eta \in [-1,1]$, and $\varphi \in \mathbb R / 2\pi \mathbb Z.$ Let $p_{\xi}, p_\eta, p_\varphi = L_z$ 
be the conjugate momenta and $l$ be the value of $L_z$.
In the new coordinates the Hamiltonian $H$ has the form
\begin{equation} \label{eq/hamnewcoord}
H = \dfrac{H_\xi + H_\eta}{\xi^2 - \eta^2},
\end{equation}
where
\begin{equation*}
 H_\xi = \frac{1}{2a^2}(\xi^2-1)p_\xi^2 + \frac{1}{2a^2}\frac{l^2}{\xi^2-1} - \frac{\mu_1+\mu_2}{a}\xi
\end{equation*}
and
\begin{equation*}
  H_\eta = \frac{1}{2a^2}(1-\eta^2)p_\eta^2 + \frac{1}{2a^2}\frac{l^2}{1-\eta^2} + \frac{\mu_1-\mu_2}{a}\eta.
\end{equation*}
Multiplying Eq.~\eqref{eq/hamnewcoord} by $\xi^2-\eta^2$ and separating we get the first integral
\begin{equation*}
G =  \xi^2H - H_{\xi} = \eta^2 H + H_\eta.
\end{equation*}
In original coordinates $G$ has the form given in Eq.~\eqref{eq/thirdintegral}.
Since $L_z = p_\varphi$, the function $G$ commutes both with $H$ and with $L_z$.

\subsection{Regularization} \label{sec/regularization}
We note that in the case when one of the strengths is attractive, collision orbits are present and, consequently, the flow of $H$ on $T^{*}(\mathbb R^3 \setminus \{o_1, o_2\})$ is not complete. This complication is, however,
not essential for our analysis since collision orbits, as in the Kepler case, can be regularized.
More specifically, there exists a $6$-dimensional symplectic manifold $(P, \omega)$ and a smooth Hamiltonian function $\tilde{H}$ on $P$ such that
\begin{enumerate}
 \item $(T^{*}(\mathbb R^3 \setminus \{o_1, o_2\}, dq \wedge dp)$ is symplectically embedded in $(P, \omega)$,
 \item $H = \tilde{H}|_{T^{*}(\mathbb R^3 \setminus \{o_1, o_2\})}$,
 \item The flow of $\tilde{H}$ on $P$ is complete.
\end{enumerate}
This result is essentially due to \cite[Proposition 2.3]{Klein2008}, where a similar statement is proved for the gravitational planar problem. The planar problem in the case of arbitrary strengths 
can be treated similarly (note that collisions with 
a repulsive center are not possible). The spatial
case follows from the planar case
since 
collisions  occur only when $L_z = 0$. %(in the latter case the motion takes place in one of the planes that contain the $z$-axis). 
We note that the integrals $L_z$ and $G$ can be also extended to $P$. 

One important property of the regularization is that the extensions of the integrals to $P$, which will be also denoted by $H$, $L_z$ and $G$, form a \textit{completely integrable system}. In particular,
the Arnol'd-Liouville theorem \cite{Arnold1968} applies. In what follows
we shall work on the regularized space $P$.

%%%%%%%%%%%%%%%%%%%%%%%%%%%%%%%%%%%%%%%%%%%%%%%%%% Bifurcation diagrams

\section{Bifurcation diagrams} \label{sec/bifurcationdiagrams}
Before we move further and discuss scattering in the Euler problem, we shall compute the \textit{bifurcation diagrams} of the integral map $F = (H,L_z,G)$, that is, the set 
of the critical values of this map. 
We distinguish two cases, depending on whether $L_z$ is zero or different from zero. The bifurcation diagrams are  obtained by superimposing the critical values found in these two cases. By a choice of units 
we assume that 
$a = 1$.
\subsection{The case $L_z = 0$} \label{subsection/the_planar_case}

Since $L_z = 0$, the motion is planar. We assume that it takes place in the $xz$-plane. Consider the  elliptic coordinates $(\lambda, \nu) \in \mathbb R \times S^1[-\pi,\pi]$ defined by 
$$x = \sinh \lambda \cos \nu, \ z = \cosh \lambda \sin \nu.$$ 
%\begin{remark}
%The transformation $(\lambda, \nu) \mapsto (x,z)$ is a two-sheeted branch covering with the branch points at the centers. Planar collisions can be regularized with these coordinates.
%\end{remark}
The level set of constant $H = h, L_z = l = 0$ and $G = g$ in these coordinates is given by the equations
\begin{align*}
p^2_\lambda &= 2 h \cosh^2 \lambda + 2 (\mu_1+\mu_2) \cosh \lambda - 2g, \\
p^2_\nu &= -2 h \sin^2 \nu - 2 (\mu_1-\mu_2) \sin \nu + 2g,
\end{align*}
where $p_\lambda$ and $p_\nu$ are the momenta conjugate to $\lambda$ and $\nu.$
The value $(h,0,g)$ is critical when the Jacobian matrix corresponding to these equations does not have a full rank. Computation yields  lines 
\begin{multline*}
\ell_1 = \{g = h+\mu_2-\mu_1, l = 0\}, \  \ell_2 = \{g = h+\mu_1-\mu_2, l = 0\} \\ \mbox{ and } \  \ell_3 = \{g = h+\mu, l = 0\}, \ \mu = \mu_1+\mu_2,
\end{multline*}
and two curves
\begin{align*}\{g &= \mu \cosh\lambda/2, \ h =  -\mu /2\cosh\lambda, \ l = 0\}, \\
\{g &= (\mu_1 - \mu_2) \sin \nu/2, \ h = (\mu_2 - \mu_1)/2 \sin \nu, \ l = 0\}.
\end{align*}
Points that do not correspond to any physical motion must be removed from the obtained set (allowed motion corresponds to the regions where the squared momenta are positive).

\begin{remark}
The corresponding diagrams in the planar problem are given in Appendix~\ref{appendix/bifurcation_diagrams_for_the_planar_problem}; see Fig.~\ref{bifd_planar_generic_case} and \ref{bifd_planar_critical_case}. We note that in the planar case the set of the regular values of $F$ consists of contractible
components and hence the topology of the regular part of the Liouville fibration is trivial. Interestingly, this is not the case if the dimension of the configuration space is $n = 3$. 

We note that the singular Liouville foliation has non-trivial topology already in the planar case.  The corresponding bifurcations, in the sense of Fomenko theory 
\cite{Fomenko1986, Fomenko1987, Fomenko1990, BMF1990, Bolsinov2004}, have been studied in \cite{WaalkensDullinRichter04, Kim2017}.
\end{remark}

%In the case $\mu_1 = \mu_2 = 0$ the bifurcation diagram consists of a single line $g = h$.

\subsection{The case $L_z \ne 0$}

\begin{figure}[ht]
\begin{center}
\includegraphics[width=1\linewidth]{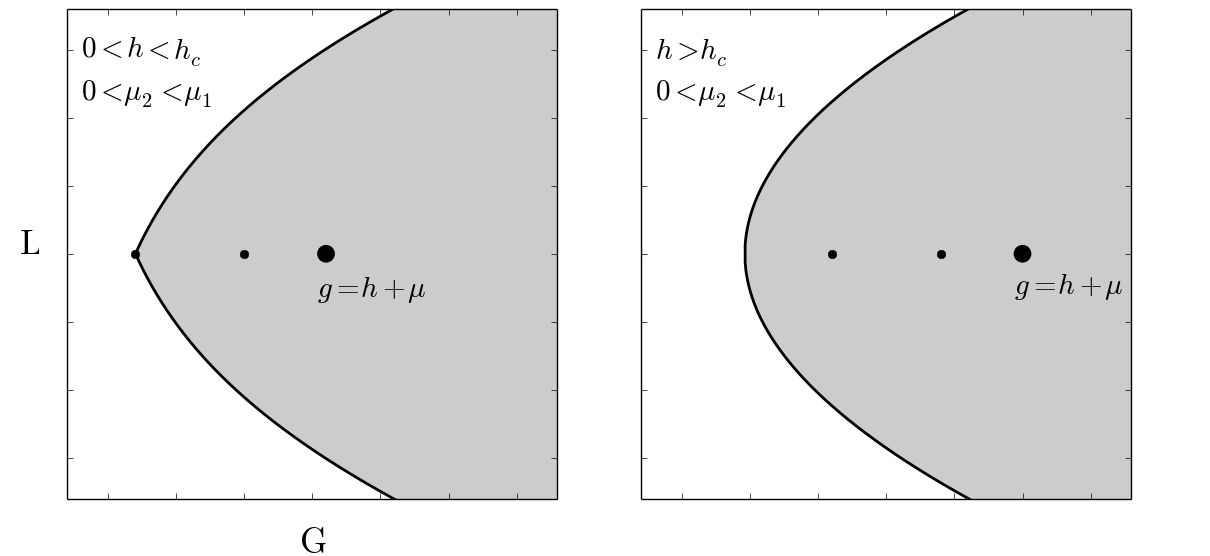}
\end{center}
\caption{Positive energy slices of the bifurcation diagram for the spatial Euler problem, attractive case. The black points correspond to the critical lines $\ell_i$.} 
\label{bifd_spatial_attr}
\end{figure}

In order to compute the critical values in this case
it is convenient to use the ellipsoidal coordinates $(\xi, \eta)$. (We note that for $L_z \ne 0$ the $z$-axis is inaccessible, so $(\xi, \eta)$ are non-singular.)
The level set of constant $H = h, L_z = l$ and $G = g$ in these coordinates is given by the equations
\begin{align*}
p^2_\xi &= \dfrac{(\xi^2 - 1)(2h\xi^2 + 2(\mu_1+\mu_2)\xi - 2g) - l^2}{(\xi^2-1)^2}, \\
p^2_\eta &= \dfrac{(1-\eta^2)(-2h\eta^2 - 2(\mu_1-\mu_2)\eta + 2g) - l^2}{(1-\eta^2)^2}.
\end{align*}
The value $(h,l,g)$ with $l \ne 0$ is critical when the corresponding Jacobian matrix does not have a full rank. 
Computation yields the following sets of critical values:
$$\left\{g = h (2\xi^2 - 1) + \dfrac{(\mu_1 + \mu_2) (3 \xi^2 - 1)}{2 \xi}, l^2 = 
-\dfrac{(\mu_1 + \mu_2 + 2 h \xi) (-1 + \xi^2)^2}{\xi}\right\}, $$
$$
\left\{g = h (2 \eta^2 - 1) + \dfrac{(\mu_1 - \mu_2) (3 \eta^2 - 1)}{2 \eta} , l^2 = -\dfrac{(\mu_1 - \mu_2 + 2 h \eta) (-1 + \eta^2)^2}{\eta}\right\},
$$
where $\xi > 1$ and $-1 < \eta < 1$. As above,  points that do not correspond to any physical motion must be removed. 

Representative positive energy slices in the gravitational case $0 < \mu_2 < \mu_1$  are given in Fig.~\ref{bifd_spatial_attr}. The case of arbitrary strengths $\mu_i$ is similar. 
The structure of the corresponding  diagrams can partially be deduced from the diagrams obtained in the planar case; 
see Appendix~\ref{appendix/bifurcation_diagrams_for_the_planar_problem}. 

%We shall see that the singularities in the interior of the slices (intersections 
%of the lines $\ell_i$ with $H = h$) give rise to eighter scattering or Hamiltonian monodromy.

%%%%%%%%%%%%%%%%%%%%%%%%%%%%%%%%% Classical scattering theory

\section{Classical scattering theory} \label{sec/scattering}
%Our final goal is to associate a scattering invariant 
%with an isolated line $\{g = h + \mu, l = 0\}, \ \mu = \mu_1 + \mu_2$ in the bifurcation diagram of the spatial two-center problem. For that we need to make a good choice of a scattering manifold and 
%a reference system, with respect to which scattering is defined. As a preparation, we

In this section we discuss certain qualitative aspects of scattering theory following \cite{Knauf1999, Knauf2011}. In Section~\ref{sec/scattering_in_integrable_systems} we explain how the theory
can be adapted to the context of scattering systems that are integrable in the Liouville sense, with the Euler problem as the leading example. % \cite{Thirring1992}?

\subsection{Preliminary remarks}

Classical scattering theory goes back to the works of
Cook \cite{Cook1967}, Hunziker \cite{Hunziker1968} and  Simon \cite{Simon1971}.  Since then it
has received considerable interest and has been actively developed in several directions; see \cite{Herbst1974, Knauf1999, Derezinski2013, Bates2007, DullinWaalkens2008}.
%We note that quantum scattering has been addressed eariler in Born's paper \cite{Born1926}.
 
In the framework of classical scattering one considers  two Hamiltonian functions $H$ and $H_r$ such that their flows become similar `at infinity'. 
This allows one can compare a given distribution of particles, that is, initial conditions,
at $t = -\infty$
 with 
their final distribution at $t = +\infty$. 
To be more specific, consider a pair of  Hamiltonians  on $T^{*}\mathbb R^n$ given by
$$H = \frac{1}{2} \|p\|^2  + V(q) \mbox{ \ and \ } H_r = \frac{1}{2} \|p\|^2+V_r(q),$$
where the (singular) potentials $V$ and $V_r$ are  assumed to
satisfy a certain 
decay assumption; see Subsection~\ref{subsection/decay_assumptions}. 
For scattering Hamiltonians the comparison will be achieved in two steps.
First we shall parametrize the possible initial and final distributions using the flow of the `free' Hamiltonian $H_0 = \frac{1}{2} \|p\|^2$. Then, for a given invariant manifold, we shall construct the \textit{scattering map},
where only $H$ and $H_r$ are compared. 

\begin{remark}
One reason for such a procedure is the following.  As we shall see later in Section~\ref{sec/scattering_in_integrable_systems} and Appendix~\ref{appendix/proof_reference_systems}, 
the `free' Hamiltonian is not a natural reference Hamiltonian for the 
Euler problem, unless the strengths  $\mu_1 = \mu_2$. However, the `free' Hamiltonian  will be convenient for the definition of the asymptotic states. 
\end{remark}

\begin{remark}
In what follows we sometimes refer to $H, H_r$ as \textit{scattering Hamiltonians} and to $H_r$ is a \textit{reference Hamiltonian} for $H$. We note that the `reference' dynamics of $H_r$ is usually
chosen to be simpler than the `original' dynamics of $H$.
\end{remark}

\subsection{Decay assumptions} \label{subsection/decay_assumptions}
In classical potential scattering the potential function $V \colon \mathbb R^n \to \mathbb R$ of a Hamiltonian $H = \frac{1}{2}\|p\|^2 + V(q)$ is assumed to decay according to one of the following estimates:
 \begin{itemize}
   \item[1.] Finite-range: $\textup{supp}(V) \subset \mathbb R^n$ is compact;

   \item[2.] Short-range case:  $|\partial_k V(q)| < c {(\|q \|+  1)}^{-|k|-\varepsilon};$

   \item[3.] Long-range case:  $|\partial_k V(q)\| < c {(\| q \|+1)}^{-|k|-1-\varepsilon}.$
  \end{itemize}
Here  $c $ and $ \varepsilon$ are positive constants, $k = (k_1, \ldots, k_n) \in \mathbb N_0^n$ is a multi-index, $|k| = k_1 + \ldots +k_n$ is a norm of $k$ and
$\|q\|$ denotes the Euclidean norm of $q$.
%and $\partial_k$ denotes a partial derivative of order $k_1 + \ldots + k_n$. 
For 
instance, any Kepler potential is of long range and the same is true of the potential found in
the Euler problem. 

We will assume the potentials
$V$ and $V_r$ are \textit{short-range} relative to some decaying rotationally symmetric potentials $\widetilde{V}$ and $\widetilde{V}_r$, respectively. For the potential $V$ this means that 
$$
\|\partial_k V - \partial_k \widetilde{V} \| < c {\|q \|}^{-|k|-1-\varepsilon},
$$
where  $\widetilde{V}$ is rotationally symmetric 
with $\widetilde{V}(q) \to 0, \|q\| \to \infty.$  A similar estimate is assumed to hold for $V_r - \widetilde{V}_r.$

\begin{remark}
The potentials $\widetilde{V}$ and $\widetilde{V}_r$    are needed 
to guarantee that the asymptotic direction and the  
 impact parameter are defined and parametrize the scattering trajectories in a continuous way. This is known to be the case for short-range potentials $V$ \cite{Knauf2011}. 
 Our case reduces to the case of symmetric potentials and in that case the statement follows from the conservation of the angular momentum.
 \end{remark}

\subsection{Asymptotic states} \label{subsection/asymptotic_states}
The Hamiltonian flow $g_H^t \colon P \to P$ of $H$ partitions the (regularized) phase space $P$ into the following invariant subsets:
$$b^{\pm} = \{x \in P \mid \textup{sup}_{t \in \mathbb R^{\pm}} \|g_H^t(x)\| < \infty\} \ \mbox{ and } \ s^{\pm} =  
\{x \in P \mid H(x) > 0\} \setminus b^{\pm}.$$
The invariant subsets
$$b = b^{+} \cap b^{-}, \ s = s^{+} \cap s^{-} \  \mbox{ and } \ trp = (b^{+} \setminus b^{-}) \cup (b^{-} \setminus b^{+})$$ are the sets of the \textit{bound},
the \textit{scattering} and the \textit{trapped} states, respectively.  We note that $s^{-}, s^{+}$ and hence $s = s^{-} \cap s^{+}$ are open subsets
of $P$. 

%Following \cite{Thirring1992} we define the following sets 
%\begin{multline*}
% \mathscr{A} = \{f \in C(s) \mid \mbox{ the limits } \lim\limits_{t\to\pm\infty} f(g^t_H(x)) \mbox{ exist and } \\ \mbox{ depend continuously on } x \in s \}
%\end{multline*}
%and
%$$
%\{H\}' = \{f \in C(s) \mid  \lim\limits_{t\to\pm\infty} f(g^t_H(x)) = f(x), \ x \in s\}. 
%$$
%The set $\mathscr{A}$ contains  the \textit{asymptotic constants} of the motion. The set $\{H\}'$ contains the \textit{constants} of the motion. Naturally,

If the potential  $V$ is short-range relative to a decaying rotationally symmetric potential, then 
the following limits
$$
 {\hat{p}}^{\pm}(x) = \lim_{t\to \pm \infty}  p(t,x)  \ \mbox{ and } \ 
 q_{\bot}^{\pm}(x) = \lim_{t\to \pm \infty} \left( q(t,x) - \langle q(t,x),  {\hat{p}}^{\pm}(x)\rangle \frac{{\hat{p}}^{\pm}(x)}{2h} \right),
$$
where $h = H(x) > 0$ is the energy of $g_H^t(x)$, are defined for any $x \in s^{\pm}$ and depend continuously on $x$. These limits are
called the
\textit{asymptotic direction}
and the \textit{impact parameter} of the trajectory $g_H^t(x)$, respectively. We note that 
an asymptotic direction is always orthogonal to the corresponding impact parameter. Due to the  $g_H^t$-invariance of ${\hat{p}}^{\pm}$ and $q_{\bot}^{\pm}$, 
we have the maps
$$
A^{\pm} = (\hat{p}^{\pm},q_{\bot}^{\pm}) \colon  s / g^t_H \to  AS 
$$
from $s / g^t_H$ to the asymptotic states $AS \subset \mathbb R^{n}\times\mathbb R^n$. Here $s / g^t_H$ is the space of trajectories in $s$, that is, it is a quotient space of $s$ by
the equivalence relation where
two points are considered equivalent if and only if they belong to a single trajectory $g^t_H(x)$. 
Similarly, one can construct
the maps 
$$
A_r^{\pm} = (\hat{p}^{\pm},q_{\bot}^{\pm}) \colon  s_r / g^t_{H_r} \to AS
$$
for the `reference' Hamiltonian $H_r =  \frac{1}{2}p^2 + V_r(q).$ 

\subsection{Scattering map} \label{subsection/scattering_map}
Using the maps $A^{\pm}$ and $A^{\pm}_r$ constructed in Subsection~\ref{subsection/asymptotic_states}, we can now define the notion of a \textit{scattering map} for a given invariant
submanifold $R$ of $s$.

\begin{definition} \label{definition/scattering_map} Let $R$ be a $g_H^t$-invariant submanifold of $s$ and $B = R / g_H^t$. Assume that the composition map
 $$
 S = (A^{-})^{-1} \circ A^{-}_r \circ (A^{+}_r)^{-1} \circ A^{+}
 $$
is well defined and maps $B$ to itself. The map $S$ is called the \textit{scattering map} (w.r.t. $H, H_r$ and $B$).
\end{definition}

\begin{remark}
Due to the decay assumptions the maps 
 $$A^{\pm} \colon s/g_H^t \to AS \ \mbox{ and } \  A_r^{\pm} \colon s_r/g_H^t \to AS$$ are homeomorphisms onto their images in $AS$. 
 It follows that the scattering map $S \colon B \to B$ is a homeomorphism as well. Here the sets
 $s/g_H^t$, $s/g_H^t$ and $B$ are endowed with the quotient topology.
 %We note that in the case when $V$ is short-range relative to $V_r$ more can be said. For instance, if $V$ is short-range (relative to 
 %$V_r = 0$), then for a \textit{non-trapping energy} $h > 0$ the corresponding scattering map is a symplectic transformation of the cotangent bundle $T^*S^{n-1}$, where $S^{n-1}$ is the sphere of 
 %asymptotic directions; see \cite{Knauf1999, Knauf2011}.

\end{remark}

\subsection{Knauf's topological degree}
%The computation of the scattering map is, in general, a difficult problem. 
To get qualitative information about the scattering it is useful to look at topological invariants of the scattering map.
An important example in the context of general scattering theory is \textit{Knauf's topological index}; see \cite{Knauf1999, Knauf2008}. We shall now recall its definition. %In
% Section~\ref{sec/scattering_in_integrable_systems} we shall define another topological invariant, %suitable for general scattering systems that are also integrable.

Consider the case when the potential $V$ is short-range relative to $V_r = 0$. 
Let $h > 0$ be 
a \textit{non-trapping energy}, that is, a positive energy such that the energy level $H^{-1}(h)$ contains no trapping states, and let $R = H^{-1}(h) \cap s$ be the intersection of the level
$H^{-1}(h)$ with the set $s$ of the scattering states. There is the following result.

\begin{theorem} \textup{(\cite{Derezinski2013, Knauf1999})} 
The scattering manifold $B = R / g^t_H$ is the cotangent bundle $T^*S^{n-1}$, where $S^{n-1}$ is the sphere of asymptotic directions. The corresponding scattering map 
$$S_h \colon B \to B$$
is a symplectic transformation of $T^*S^{n-1}$.
\end{theorem}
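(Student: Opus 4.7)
My plan is to realize the asymptotic-state maps $A^\pm$ as the reductions, by the Hamiltonian flow at energy $h$, of M{\o}ller-type symplectomorphisms $\Omega^\pm \colon R \to H_0^{-1}(h)$. With $V_r = 0$, the reference asymptotic maps $A_r^\pm$ become the identity after identifying $H_0^{-1}(h)\cap s$ modulo the free flow with $T^*S^{n-1}$, so the scattering map reduces to $S_h = (A^-)^{-1}\circ A^+$, and its symplecticity will follow from that of each $A^\pm$.

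The first step is the identification $AS = T^*S^{n-1}$ as a symplectic manifold. Free orbits at energy $h$ are straight lines parametrized by the constant momentum $\hat p \in S^{n-1}_h := \{p : \|p\|^2 = 2h\}$ and by the impact parameter $q_\bot \in \hat p^\perp$. Since $\hat p^\perp \cong T^*_{\hat p}S^{n-1}_h$ under the Euclidean metric, Marsden--Weinstein reduction of $(T^*\mathbb R^n, dp\wedge dq)$ at energy $h$ yields $T^*S^{n-1}$ endowed with its canonical symplectic form $\omega_{\mathrm{can}}$. This identifies $AS$---and in particular $s_r/g^t_{H_r}$---with $(T^*S^{n-1},\omega_{\mathrm{can}})$.

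Next I would introduce
\[
 \Omega^\pm(x) \;=\; \lim_{t\to\pm\infty}\, g_{H_0}^{-t}\circ g_H^t(x), \qquad x \in R.
\]
Existence and smoothness of the limit follow from the short-range bound on $V$ together with the linear lower bound $\|q(s)\| \gtrsim |s|$ valid along any scattering orbit at a non-trapping energy; indeed, the integrand in
\[
g_{H_0}^{-t}\circ g_H^t(x) - x \;=\; \int_0^t \frac{d}{ds}\bigl(g_{H_0}^{-s}\circ g_H^s(x)\bigr)\, ds
\]
involves $\nabla V$ evaluated along the trajectory $g_H^s(x)$ and is thus dominated by $c\,\|q(s)\|^{-2-\varepsilon}$, which is integrable in $s$. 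Since both $g_H^t$ and $g_{H_0}^t$ are symplectomorphisms and the convergence is $C^1$ uniformly on compact sets, each $\Omega^\pm$ is a symplectomorphism onto its image in $H_0^{-1}(h)$ and intertwines the flows of $H$ and $H_0$. Passing to the quotient by the $\mathbb R$-action it descends to a symplectic diffeomorphism $\widetilde\Omega^\pm \colon B \to T^*S^{n-1}$, and by direct inspection $\widetilde\Omega^\pm(x) = (\hat p^\pm(x), q_\bot^\pm(x)) = A^\pm(x)$. Consequently
\[
S_h \;=\; (A^-)^{-1}\circ A_r^-\circ (A_r^+)^{-1}\circ A^+ \;=\; (\widetilde\Omega^-)^{-1}\circ \widetilde\Omega^+
\]
is a symplectomorphism of $T^*S^{n-1}$.

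The main obstacle is the analytic content of the previous paragraph: establishing that the M{\o}ller limits exist, are globally defined on $R$, and converge in $C^1$ on compact subsets. Existence of the limit uses only the short-range decay and the linear growth of $\|q(s)\|$; the symplectic property requires enough regularity of the convergence to commute limit with exterior derivative, which can be secured by a Gronwall-type estimate on the variational equation; globality on $R$ is where the non-trapping hypothesis enters essentially, ruling out orbits that fail to escape. Once these analytic points are in place, Step one and the bookkeeping with $A_r^\pm$ become routine.
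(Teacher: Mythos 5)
The paper does not actually prove this theorem --- it is quoted from \cite{Derezinski2013, Knauf1999} --- and your argument via M{\o}ller transformations, symplectic reduction of the free energy shell to $T^*S^{n-1}$, and passage to the quotient is exactly the standard proof found in those references. So in approach you are aligned with the paper's (implicit) source; the remarks below concern two places where your write-up has real gaps.

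First, surjectivity. Your construction only yields that $\widetilde\Omega^\pm\colon B\to T^*S^{n-1}$ is a symplectomorphism \emph{onto its image}; to conclude that $B$ \emph{is} $T^*S^{n-1}$ (and that $S_h$ is defined on all of it) you must show every free datum $(\hat p, q_\bot)$ at energy $h$ is the asymptote of some interacting scattering orbit. This is the existence of the inverse M{\o}ller maps $\lim_{t\to\pm\infty} g_H^{-t}\circ g_{H_0}^{t}$ on $H_0^{-1}(h)$, proved again by Cook's method (free orbits spend all large times in the region where $V$ is small), together with the non-trapping hypothesis to guarantee that the resulting interacting orbit lies in $s$ rather than in $b^{\mp}$. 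Your closing paragraph invokes non-trapping only for ``globality on $R$,'' which addresses the domain of $\Omega^\pm$, not the range; without surjectivity the identification of $B$ with the full cotangent bundle, and hence the composability of $(A^-)^{-1}$ with $A^+$, is not established. Second, a bookkeeping point in the Cook estimate: the position component of $\tfrac{d}{ds}\bigl(g_{H_0}^{-s}\circ g_H^{s}(x)\bigr)$ is $s\,\nabla V(q(s))$, not $\nabla V(q(s))$ --- the derivative of the linearized free flow contributes a factor of $s$. This is precisely why the short-range condition must read $|\partial_k V|\le c\,(1+\|q\|)^{-|k|-1-\varepsilon}$ (as in the paper's \emph{relative} short-range display, though note the paper's own short-/long-range labels in Subsection 4.2 appear to be swapped): with $\|q(s)\|\gtrsim |s|$ one then gets an integrand of order $|s|^{-1-\varepsilon}$. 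Your stated bound $c\,\|q(s)\|^{-2-\varepsilon}$ on $\nabla V$ is the correct one, but the missing factor of $s$ should be made explicit, since with the weaker rate $-|k|-\varepsilon$ the position limit genuinely fails to converge.
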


Knauf's topological degree is defined as a topological invariant of $S_h$. Specifically,
let  $\Pr \colon T^{*}S^{n-1} \to S^{n-1}$ be the canonical projection  and
$$S^{n-1}_{p} = T^{*}_pS^{n-1} \cup \{ \infty \}$$ 
be the one-point compactification of the cotangent space $T^{*}_pS^{n-1}$. 

\begin{definition} \textup{(Knauf, \cite{Knauf1999})}
 The \textit{degree  $\textup{deg}(h)$ of the energy $h$ scattering map} $S_h$ is defined as the topological degree of the map
\begin{equation*}
 \Pr \circ S_h \colon  S^{n-1}_{p} \to S^{n-1}.
\end{equation*}
\end{definition}
\begin{remark}
We note that by  continuity $\textup{deg}(h)$ is
independent of the choice of the initial direction $p \in S^{n-1}$; see \cite{Knauf1999}.
\end{remark}

The following theorem shows that for regular (that is, everywhere smooth) potentials $\textup{deg}(h)$ is either $0$ or $1$,
depending on the value of the energy $h$; see Fig.~\ref{degree}. We note that for singular potentials, such as the Kepler potential, values different from $0$ and $1$ may appear. 
\begin{theorem} \label{kktheorem} \textup{(Knauf-Krapf, \cite{Knauf2008})} 
 Let $V$ be a regular short-range potential and $h > 0$ be a non-trapping energy. Then 
 $$
 \textup{deg}(h) = \begin{cases}
		    0,  \ \ h \in (\sup V, \infty), \\
                    1, \ \ h \in (0, \sup V).
                    
                   \end{cases}
$$
\begin{figure}[ht]
\begin{center}
\begin{minipage}[h]{0.49\linewidth}
\includegraphics[width=1\linewidth]{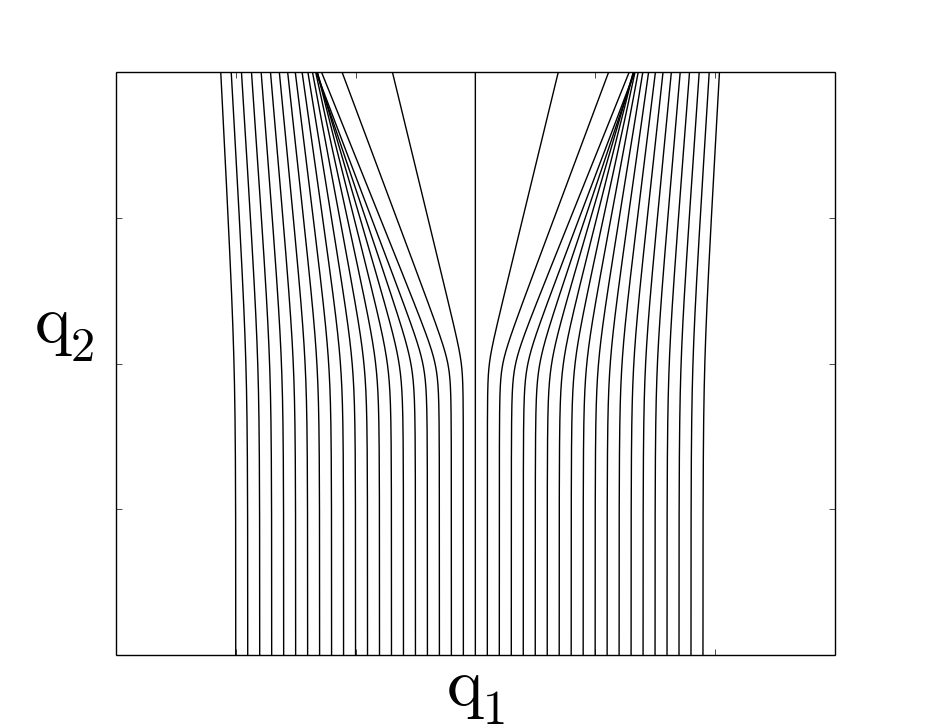}
\end{minipage}
\hfill
\begin{minipage}[h]{0.48\linewidth}
\includegraphics[width=1\linewidth]{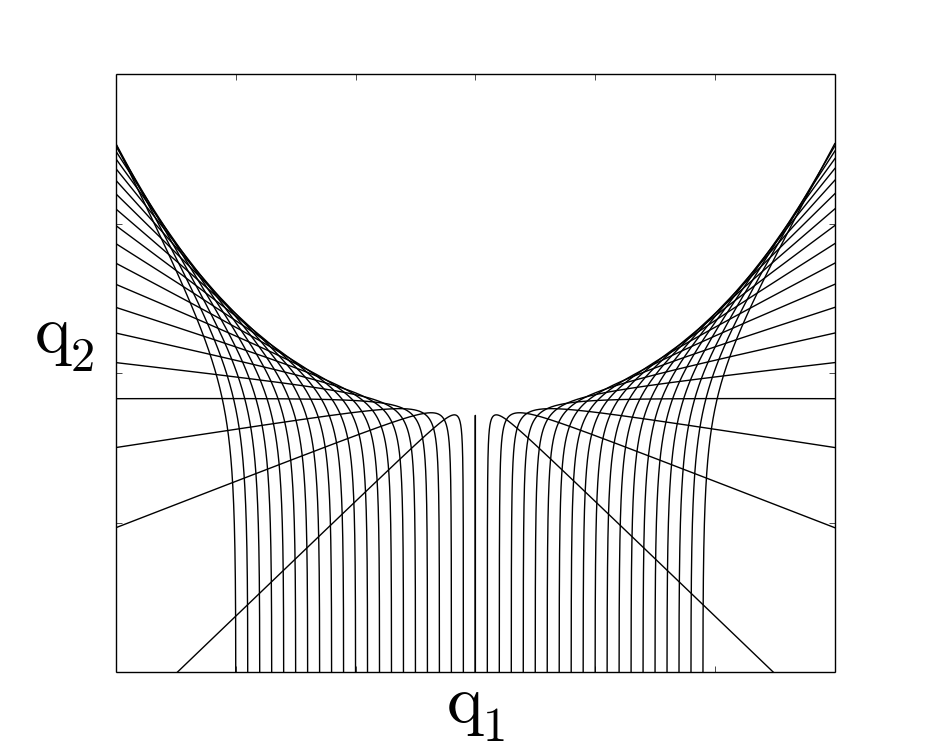}
\end{minipage}
\end{center}
\caption{Scattering at different energies.  At high energies
 $\textup{deg}(h) = 0$ (left), at low energies  $\textup{deg}(h) = 1$ (right).  } 
\label{degree}
\end{figure} 
\end{theorem}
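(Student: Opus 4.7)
The approach is to compute the degree via homotopy arguments, using the free dynamics as a reference at high energy and a radially symmetric bump as a model at low energy. First observe that $\Pr \circ S_h$ extends continuously to the point at infinity of $S^{n-1}_p$ with value $p$: by the short-range decay of $V$, a trajectory with impact parameter $|q_{\bot}^{-}| \to \infty$ stays where the potential gradient is negligible along the flow, so the total deflection vanishes and $\hat p^{+} \to p$. Hence $\Pr \circ S_h : S^{n-1}_p \to S^{n-1}$ is a continuous self-map of the $(n-1)$-sphere with a well-defined topological degree.

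For $h > \sup V$, I would consider the homotopy $V_\lambda = \lambda V$, $\lambda \in [0,1]$. Since $\sup V_\lambda = \lambda \sup V < h$, the kinetic energy $h - V_\lambda$ is uniformly bounded below by $h - \sup V > 0$, so no energy-$h$ trajectory can slow down indefinitely and $h$ remains non-trapping throughout the deformation. Continuous dependence on parameters then produces a continuous homotopy of scattering maps $S_{h,\lambda}$ between $S_h$ and the free-motion scattering map at $\lambda = 0$. The latter is the identity of $T^{*}S^{n-1}$, so $\Pr \circ S_{h,0} \equiv p$ is constant of degree $0$. Homotopy invariance of the degree yields $\textup{deg}(h) = 0$.

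For $0 < h < \sup V$, the linear homotopy fails at $\lambda_{*} = h/\sup V$, where energy-$h$ trajectories develop turning points near the maximum of $V$ and trapping may appear. Instead, my plan is to deform $V$ through regular short-range potentials with $\sup V_\lambda > h$ and $h$ non-trapping to a radially symmetric bump $V^{*}(q) = c\, e^{-|q|^2}$ with $c > h$. For $V^{*}$ the scattering map reduces to a scattering angle $\theta(b)$ depending only on the modulus $b$ of the impact parameter; by radial symmetry and the classical impenetrability of the barrier, $\theta(b) \to 0$ as $b \to \infty$ and $\theta(0) = \pi$. Choosing $V^{*}$ so that $\theta$ is strictly monotone makes $\Pr \circ S_h$ a diffeomorphism from $S^{n-1}_p$ minus two points onto $S^{n-1}$ minus two points, sending $0 \mapsto -p$ and $\infty \mapsto p$, and hence of degree $1$.

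The hard part is realizing the homotopy in the second case: non-trapping is a generic but fragile condition that can fail along a straight-line path even when it holds at both endpoints. The idea is to first reshape $V$ (preserving $\sup V > h$ and non-trapping) into a potential whose super-level set $\{V > h\}$ is a topological ball near the maximum, then interpolate linearly to $V^{*}$. Because the set of trapping energies of any fixed potential has measure zero, a small generic perturbation of this interpolation can be chosen to avoid the trapping locus in the space of potentials, securing homotopy invariance of the degree and completing the identification $\textup{deg}(h) = 1$.
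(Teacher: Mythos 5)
This statement is quoted from Knauf--Krapf \cite{Knauf2008} and the paper gives no proof of it, so there is no in-text argument to compare yours with; I can only assess your proposal on its own terms, and it has a genuine gap that appears in both cases: homotopy invariance of $\deg(h)$ is only available while the deformation stays inside the non-trapping set, and neither of your arguments for staying there is sound. For $h>\sup V$ you claim that a uniform positive lower bound on the kinetic energy $h-V_\lambda$ forces $h$ to remain non-trapping along $V_\lambda=\lambda V$. This is false: a particle can be trapped while moving at speed bounded away from zero. Concretely, $V(q)=-e^{-\|q\|^2}$ is regular and short-range with $\sup V=0$, yet for every $h\in(0,e^{-2}]$ there is a trapped circular orbit of energy $h$ (radius $r$ with $h=(r^2-1)e^{-r^2}$, $r>1$). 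Your reasoning would show that every $h>\sup V$ is automatically non-trapping, which this example refutes and which would make the non-trapping hypothesis in the high-energy branch of the theorem vacuous. In particular the linear homotopy $\lambda V$ can enter the trapping regime at intermediate $\lambda$ even when both endpoints are non-trapping, and at such $\lambda$ the scattering map is not defined on all of $S^{n-1}_p$, so the degree need not be preserved.

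The low-energy case has the same problem in a sharper form. The assertion that ``the set of trapping energies of any fixed potential has measure zero'' is false --- the example above has a whole interval $(0,e^{-2}]$ of trapping energies, and Knauf--Krapf themselves emphasize that large classes of potentials trap on energy intervals --- so the trapping locus in the space of pairs $(V_\lambda,h)$ cannot be avoided by a ``generic small perturbation''; it is not a thin set. More fundamentally, the degree is precisely a locally constant function on the non-trapping set that takes different values on different connected components (that is what the dichotomy $0$ versus $1$ expresses), so the non-trapping set is genuinely disconnected and no amount of wiggling will join points in different components. Asserting that your given $(V,h)$ with $h<\sup V$ can be joined to the radial bump $(V^{*},h)$ through non-trapping pairs is therefore essentially equivalent to the statement you are trying to prove. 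Your boundary computations (continuity at $\infty$ with value $p$; degree $0$ for the free flow; degree $1$ for a monotone radial barrier with $c>h$, which is indeed non-trapping by a virial estimate) are fine, but the theorem has to be proved by working with the given potential directly --- in \cite{Knauf2008} the sign of the degree is read off from the given data, e.g.\ from the nonemptiness of the forbidden region $\{V\ge h\}$ when $h<\sup V$ --- rather than by transporting the answer from a model along a deformation whose admissibility is exactly the open question.
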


\begin{remark}
 For the Euler problem with $\mu_1\mu_2 \ne 0$, Knauf's degree is not defined (every positive energy $h$ is trapping).
 Moreover, the free flow is not a proper reference 
 unless $\mu_1 = \mu_2$; see Section~\ref{sec/scattering_in_integrable_systems}. Nonetheless,  as we shall show in Sections~\ref{sec/scattering_in_integrable_systems} and \ref{sec/scattering_2cp}, for a proper choice of a reference 
 Hamiltonian and a scattering manifold, an analogue of Knauf's degree can be defined.
\end{remark}

%%%%%%%%%%%%%%%%%%%%%%%%%%%%%%%%%%%%%%%% Scattering in integrable systems

\section{Scattering in integrable systems} \label{sec/scattering_in_integrable_systems}

The goal of the present section is to recast the above theory of scattering in the context of Liouville integrability. The approach developed in the present section
will be applied to the Euler problem in Section~\ref{sec/scattering_2cp}.

\subsection{Reference systems}

 As we have seen in Section~\ref{sec/scattering}, reference systems can be 
used to define a scattering map, which is a map
between the asymptotic states at $t = - \infty$ and $t = + \infty$ of a given invariant manifold. 
For integrable systems, natural invariant manifolds are the fibers of
the corresponding integral map
$F$ and various unions of these fibers. It is thus natural to require that  the 
flow of a reference Hamiltonian maps
the set of asymptotic states  of a given fiber of $F$  to the set of asymptotic states of the same fiber. 
This leads to the following definition.
\begin{definition} \label{definition/reference}
Consider a scattering Hamiltonian $H$ which gives rise to an integrable  system $F \colon P \to \mathbb R^n$. A scattering Hamiltonian $H_{r}$ will be called a 
\textit{reference} Hamiltonian for this system if 
 $$
 F\left(\lim\limits_{t\to+\infty} g^t_{H_{r}}(x)\right) = F\left(\lim\limits_{t\to-\infty}g^t_{H_{r}}(x)\right)
 $$
 for every scattering trajectory $t \mapsto g^t_{H_{r}}(x)$.
 %We say that we have a structure of a \textit{scattering integrable system}.
\end{definition}

\begin{remark}
We note that Definition~\ref{definition/reference} can be generalized to the case of scattering and integrable systems defined on abstract symplectic manifolds. However,
for the purpose of the present paper it is sufficient to assume that $H$ and $H_r$ are as in Section~\ref{sec/scattering}. 
 \end{remark}

\begin{remark}
 In scattering theory it is usually required that the 
flow of a reference Hamiltonian maps 
the set of asymptotic states  of a given energy level to itself, which is a less restrictive assumption. Our point of view is that for integrable systems conserved quantities, such as the angular momentum,
should also be taken into account. 
\end{remark}

A series of examples of reference Hamiltonians in the above sense is given by rotationally symmetric potentials. This follows from the conservation of 
angular momentum. 
%Namely, we have the following result. 
%\begin{theorem}
%Let $V_1$ and $V_2$ be decaying rotationally symmetric potentials. Then the corresponding %Hamiltonians
%$$H_1 = \frac{\|p\|^2}{2} + V_2 \ \mbox{ and } \ H_2 = \frac{\|p\|^2}{2} + V_2$$
%are reference Hamiltonians with respect to each other, provided the non-trapping condition. In particular, the `free' Hamiltonian $H_0 = \%frac{1}{2}\|p\|^2$ is a 
%reference for any Hamiltonian with a decaying rotationally symmetric potential. 
%\end{theorem}
%\begin{remark}
 %The structure of an integrable systems is assumed to be given by the angular momentum. We note %that if the structure is chosen differently, then the result is no longer true; cf. Theorem~\ref{theorem/%ref} below.
%\end{remark}
Another example is the Euler problem. We recall that the Hamiltonian of this problem is given by
$$H = \frac{\|p\|^2}{2} - \dfrac{\mu_1}{r_1} - \dfrac{\mu_2}{r_2}.$$
Let
$F = (H, L_z, G) \colon P \to \mathbb R^3$ be the integral map defined in Section~\ref{sec/preliminaries}. We have the following result. 
%which shows that the Euler problem has two natural reference Hamiltonians when $\mu_1 \ne \mu_2$ and one otherwise. 

\begin{theorem} \label{theorem/ref}
Among all Kepler Hamiltonians only
$$H_{r_1} = \frac{1}{2}p^2 - \dfrac{\mu_1-\mu_2}{r_1} \ \mbox{ and } \ H_{r_2} = \frac{1}{2}p^2 - \dfrac{\mu_2-\mu_1}{r_2}$$
are reference Hamiltonians of the Euler problem $F = (H, L_z, G)$. In particular, the free Hamiltonian is a reference Hamiltonian of the Euler problem only in the case $\mu_1 = \mu_2$. 
\end{theorem}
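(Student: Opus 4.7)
The plan is to test Definition 5.1 for an arbitrary Kepler candidate $H_r=\tfrac12\|p\|^2-\mu/\|q-c\|$ by computing the limits of each component of $F=(H,L_z,G)$ as $t\to\pm\infty$ along a scattering orbit of $H_r$ and comparing. The $H$-matching is immediate: because the Kepler potential decays at infinity, $H\to\tfrac12\|\hat p^\pm\|^2$, and energy conservation for $H_r$ gives $\|\hat p^+\|=\|\hat p^-\|$. For $L_z$, the decomposition $L=q\times p=(q-c)\times p+c\times p$, together with Kepler conservation of $(q-c)\times p$, yields $\lim_{t\to\pm\infty}L_z=L_{c,z}+c_x\hat p_y^\pm-c_y\hat p_x^\pm$. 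Since scattering orbits of $H_r$ realize $\hat p^+-\hat p^-$ in arbitrary directions, matching forces $c_x=c_y=0$, so the center $c$ must lie on the $z$-axis.

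Now set $c=(0,0,c_z)$. Writing $q_\bot^\pm\times\hat p^\pm=L_c+c\times\hat p^\pm$, with the conserved Kepler angular momentum $L_c$, and computing the asymptotic behavior of each term of $G$, one finds
\[
\lim_{t\to\pm\infty}G=h+\tfrac12\bigl(\|q_\bot^\pm\times\hat p^\pm\|^2-a^2(\hat p^{\pm,2}_x+\hat p^{\pm,2}_y)\bigr)\pm a(\mu_1-\mu_2)\nu^\pm,
\]
with $\nu^\pm=\hat p^\pm_z/\|\hat p^\pm\|$, where the final sign comes from $(z\pm a)/r_i\to\operatorname{sgn}(t)\,\nu^\pm$. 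Expanding the squared cross products and using $c=(0,0,c_z)$, the condition $G(+\infty)=G(-\infty)$ rearranges to
\[
(c_z^2-a^2)\,\Delta_{xy}+2c_z\,T+2a(\mu_1-\mu_2)(\nu^++\nu^-)=0,
\]
with $\Delta_{xy}=(\hat p^{+,2}_x+\hat p^{+,2}_y)-(\hat p^{-,2}_x+\hat p^{-,2}_y)$ and $T=(\hat p^+_x-\hat p^-_x)L_{c,y}-(\hat p^+_y-\hat p^-_y)L_{c,x}$.

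The decisive step is simplifying $T$. Using the reflection symmetry of a hyperbolic Kepler orbit about its major axis, so that $\hat p^\pm=p_\infty(\pm\cos\phi_\infty\,\hat m+\sin\phi_\infty\,\hat u)$ for an orthonormal in-plane pair $(\hat m,\hat u)$ with $L_c=L\,\hat m\times\hat u$, and invoking the Kepler identity $\mu=-p_\infty L\cot\phi_\infty$ (valid in both the attractive and repulsive cases), a short vector computation yields the clean relation $T=\mu(\nu^++\nu^-)$. Since also $\Delta_{xy}=-p_\infty^2(\nu^+-\nu^-)(\nu^++\nu^-)$, the matching condition factors as
\[
(\nu^++\nu^-)\Bigl[-p_\infty^2(c_z^2-a^2)(\nu^+-\nu^-)+2c_z\mu+2a(\mu_1-\mu_2)\Bigr]=0.
\]
Demanding this identically in the free parameters $p_\infty$, $\nu^++\nu^-$, $\nu^+-\nu^-$ of a scattering orbit forces $c_z^2-a^2=0$ and $c_z\mu+a(\mu_1-\mu_2)=0$ simultaneously, whose only solutions are $(c,\mu)=(o_1,\mu_1-\mu_2)$ and $(c,\mu)=(o_2,\mu_2-\mu_1)$, i.e.\ $H_{r_1}$ and $H_{r_2}$; the free Hamiltonian $\mu=0$ satisfies these only when $\mu_1=\mu_2$. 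The principal obstacle is the middle step: handling the long-range asymptotics of the three summands of $G$ correctly, and distilling the cross-product quantity $T$ into the sharp identity $T=\mu(\nu^++\nu^-)$, which is what makes the matching condition factor and collapses the analysis to the two stated reference Hamiltonians.
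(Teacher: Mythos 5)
Your proposal is correct in outline and reaches the right answer, but it follows a genuinely different route from the paper. The paper splits the claim into sufficiency and necessity and treats them by different means: for sufficiency it exploits the fact that the Kepler problems centered at $o_1$ or $o_2$ are degenerate Euler problems, hence separable in the same prolate ellipsoidal coordinates; this produces a conserved separation constant $G_{r}$ for the reference flow, and the matching of $F$ at $t=\pm\infty$ then follows from a short comparison of $G$ with $G_{r}$ at infinity. For necessity the paper does not compute general asymptotics at all; it selects a few explicit families of test trajectories --- orbits confined to a plane $z=z_0$ to force the center onto the $z$-axis, orbits asymptotic to vertical rays $x=c$ with $c\to+\infty$ to force $b\mu=a(\mu_1-\mu_2)$, a further family to force $b=a$, complementary hyperbola branches for the repulsive case, and vertical lines for the free case. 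Your proof instead performs one uniform computation of $\lim_{t\to\pm\infty}F$ along an arbitrary Kepler scattering orbit and reduces the matching condition to a polynomial identity in the free orbit parameters $(p_\infty,\nu^{+}+\nu^{-},\nu^{+}-\nu^{-})$, which yields necessity and sufficiency simultaneously and for all signs of $\mu$ at once. What your approach buys is uniformity and the elimination of the case analysis and the $\varepsilon\to 0$ limiting argument; what it costs is that everything now rests on the identity $T=\mu(\nu^{+}+\nu^{-})$, equivalently $\mu=-p_\infty L\cot\phi_\infty$ combined with $(\hat p^{+}-\hat p^{-})\times L_c=-2p_\infty L\cos\phi_\infty\,\hat u$, where the orientation conventions for $(\hat m,\hat u)$, the sign of $L$, and the choice of hyperbola branch in the repulsive case must all be fixed consistently: a single sign slip there would flip the condition $c_z\mu+a(\mu_1-\mu_2)=0$ and hence swap or negate the reference strengths. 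You should also state explicitly that $\nu^{+}+\nu^{-}$, $\nu^{+}-\nu^{-}$ and $p_\infty$ can be varied independently over an open set (by rotating the orbital plane and symmetry axis and adjusting $L$), since that independence is what lets you equate coefficients; and note that the $\mu=0$ case exits the factorized identity early (all differences vanish), so that only the term $2a(\mu_1-\mu_2)(\nu^{+}+\nu^{-})$ survives and forces $\mu_1=\mu_2$, in agreement with the paper's Case 3.
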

\begin{proof}
 See Appendix~\ref{appendix/proof_reference_systems}.
\end{proof}

\begin{remark}
It follows from Theorem~\ref{theorem/ref} that a Kepler Hamiltonian with the
 strength $\mu_1 + \mu_2$  is not a reference of $F = (H,L_z,G)$, no matter where the center of attraction, resp., repulsion, is located. 
 For the strength $\mu_1 + \mu_2$ and only for this strength, the difference between the potentials is short-range. This implies that the \textit{M{\o}ller}  \textit{transformations} (or
 the \textit{wave} \textit{transformations})
 \cite{Knauf2011, Derezinski2013} are not defined 
 with respect to the reference Hamiltonians $H_{r_i}$, unless the reference
 flow is appropriately modified. 
 We note that the existence of M{\o}ller  transformations is important for the study of quantum scattering in this problem. 
\end{remark}

\subsection{Scattering invariants} \label{subsection/scattering_invariants}

Consider the Liouville fibration $F \colon s \to \mathbb R^n$.  Let $H_r$ be a reference Hamiltonian for $F$ such that $A^{\pm}(s) \subset A^{\pm}( s_r)$ holds. Setting $R = s$, we get the scattering map 
$$
S \colon B \to B, \ B = R / g^t_H.
$$
The scattering map $S$ allows to identify the asymptotic states of $s$ at $t = + \infty$ with the asymptotic states at $t = -\infty$. This results in a new total space 
$s_c$. We observe that under this identification the
asymptotic states of a given fiber of $F \colon s \to \mathbb R^n$ are mapped to the asymptotic states of the same fiber. This implies that $s_c$ is naturally fibered by $F$.
The resulting fibration will be denoted by
$$
F_c \colon s_c \to \mathbb R^n.
$$
We note that the invariants of the fibration $F_c$
contain essential information about the scattering dynamics. One such invariant is \textit{scattering monodromy} which we define as follows.

\begin{definition} \label{definition/scattering_monodromy}
Assume that 
$$F_c \colon s_c \to \mathbb R^n$$ is a torus bundle. The (usual) monodromy of this torus bundle will be called \textit{scattering monodromy} of the fibration $F$.
\end{definition}

%We shall now explain how the above definition of scattering monodromy is related to the existing definitions for , introduced 

\begin{remark}
We note that scattering monodromy in the above sense is related to \textit{non-compact monodromy} introduced in \cite{Efstathiou2017} for unbound systems with focus-focus singularities. 
It is known that focus-focus singularities come with a circle action \cite{Zung1997}. One can use this (global) action to compactify the fibration $F$ near a focus-focus fiber.
\end{remark}

\subsection{Planar potential scattering} \label{subsection/planar_potential_scattering} Here we shall discuss  the case $n = 2$ of planar scattering systems. The goal is to relate our notion 
of scattering monodromy
to the existing definition in terms of the deflection angle \cite{Bates2007, DullinWaalkens2008} and to make an explicit connection to the  scattering map.

Assume that  $V$ and $V_r$ are rotationally symmetric, that is, 
$$V(q) = W(\|q\|) \ \mbox{ and } \ V_r(q) = W_r(\|q\|) \  \mbox{ for some } \ W, W_r  \colon \mathbb R_{+} \to \mathbb R.$$ 
Then the angular momentum $L_z = xp_y - yp_x$ is conserved. Let $F = (H,L_z)$ be the integral map of the original system and $N$ be an arbitrary submanifold of the non-trapping set
 \begin{equation} \label{NTdefinition_planar}
  NT  = \{(h,l) \in \textup{image}(F) \mid F^{-1}(h,l) \subset s\}.
 \end{equation}
The  manifold $F^{-1}(N)$ is an invariant submanifold of the phase space $P$, which contains no trapping states (it consist of scattering states only).

Consider the case when $N = \gamma$ is a regular simple closed curve in $NT$. Let $R = F^{-1}(\gamma)$ and $S \colon B \to B,$ $B = F^{-1}(\gamma) / g^t_H$,
denote the corresponding scattering map. Then we have the following result.
\begin{theorem} \label{theorem/eq_definitions}
 The following statements are equivalent.
 
 \begin{itemize}
  \item[(1)] The scattering monodromy along $\gamma$ is a Dehn twist  of index $m$;
  \item[(2)] The variation of the \textit{deflection angle} along $\gamma$ equals $2\pi m$; 
  \item[(3)] The scattering map $S$ is a Dehn twist  of index $m$.
 \end{itemize} 
 \end{theorem}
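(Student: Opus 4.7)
The plan is to use rotational symmetry to put the scattering map in an explicit normal form and then read off each of the three invariants directly from that formula. Since both $V$ and $V_r$ are rotationally symmetric, $L_z$ is conserved for both flows and every scattering trajectory at a non-trapping value $(h,l)$ has incoming and outgoing asymptotic momentum directions related by a rotation through well-defined angles $\Theta(h,l)$ for $H$ and $\Theta_r(h,l)$ for $H_r$. These deflection functions are smooth on $NT$ and, at least locally along $\gamma$, admit smooth real-valued lifts; what condition (2) measures is the change of such a lift after one traversal of $\gamma$. Parametrizing the trajectory space at fixed $(h,l)$ by the angle $\theta$ of the incoming asymptotic direction and tracking $\theta$ through the composition $S = (A^{-})^{-1}\circ A^{-}_r\circ (A^{+}_r)^{-1}\circ A^{+}$, a direct computation yields the normal form $S(h,l,\theta) = (h,l,\theta + \Delta\Theta(h,l))$, where $\Delta\Theta = \Theta - \Theta_r$.

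With this formula the equivalence (2) $\Leftrightarrow$ (3) is essentially immediate. The quotient $B = F^{-1}(\gamma)/g^t_H$ is an $S^1$-bundle over $\gamma$, with fibre the $SO(2)$-orbit on the trajectory space at fixed $(h,l)$, and hence a two-torus. In a global trivialization $(s,\theta) \in \gamma \times S^1$ the scattering map becomes the shear $(s,\theta) \mapsto (s,\theta + \Delta\Theta(s))$, whose mapping class is the Dehn twist of index equal to the winding number of $s \mapsto \Delta\Theta(s)$. Since $\Theta_r$ is the deflection of a rotationally symmetric reference and can be chosen to be single-valued on a neighbourhood of $\gamma$ (taking $V_r = 0$ gives $\Theta_r \equiv 0$), the winding of $\Delta\Theta$ coincides with that of $\Theta$, which equals $m$ exactly when the deflection angle varies by $2\pi m$ along $\gamma$. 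For (1) $\Leftrightarrow$ (3) I would analyse the restriction of the compactified fibration $F_c : s_c \to \mathbb{R}^2$ to $\gamma$. Each fibre is obtained from the cylinder $\mathbb{R}_t \times S^1_{\varphi}$ of scattering orbits at fixed $(h,l)$ by identifying its two asymptotic boundary circles through the gluing encoded in $S$; this identification shifts $\varphi$ by $\Delta\Theta(h,l)$, so the fibre is a two-torus with natural basis cycles $\alpha$ (the $SO(2)$-orbit) and $\beta$ (the time direction closed up through the gluing). As $s$ runs around $\gamma$ the shift accumulates to $2\pi m$, sending $\alpha \mapsto \alpha$ and $\beta \mapsto \beta + m\alpha$; in $\textup{SL}(2,\mathbb{Z})$ this is precisely a Dehn twist of index $m$.

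The main obstacle I expect is the topological care required to make the monodromy computation in the preceding paragraph rigorous, namely showing that $F_c^{-1}(\gamma)$ is a locally trivial torus bundle in a manner compatible with the $SO(2)$-action and with the trajectory-space trivialization. Concretely, this involves the smoothness in $s$ of the asymptotic maps $A^{\pm}, A^{\pm}_r$ and of the gluing produced by $S$, the existence of a continuous choice of the cycle $\beta$, and the independence of the resulting monodromy from that choice modulo $\alpha$. Once these points are secured, all three conditions reduce to the same integer invariant: the winding number of $\Delta\Theta$ along $\gamma$.
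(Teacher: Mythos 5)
Your proposal is correct, but it is organized around a different mechanism than the paper's proof. You derive everything from the explicit normal form $S(h,l,\theta)=(h,l,\theta+\Delta\Theta(h,l))$, $\Delta\Theta=\Theta-\Theta_r$, obtained by chasing the incoming direction through $(A^{-})^{-1}\circ A^{-}_r\circ (A^{+}_r)^{-1}\circ A^{+}$, and then read off all three integers as the winding number of $\Delta\Theta$ along $\gamma$: for $(2)\Leftrightarrow(3)$ the scattering map is a shear on the torus $B\simeq\gamma\times S^1$, and for $(1)\Leftrightarrow(3)$ the fibers of $F_c$ over $\gamma$ are tori glued from the orbit cylinders by that same shift, so $\beta\mapsto\beta+m\alpha$. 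The paper instead proves $(1)\Leftrightarrow(2)$ by an action--angle argument (it identifies $\Phi-\Phi_r$ with a rotation number on the fibers of $F_c$ and transports homology cycles, using the $L_z$ circle action), and it handles the scattering map via a more indirect argument: the bundle $F_c^{-1}(\gamma)\to\gamma$ and the time-compactified bundle $F^{-1}(\gamma)^c\to S^1$ share the same total space, so their monodromies must agree. Your route buys an explicit formula for $S$ and avoids the ``same total space, hence same monodromy'' step, which for parabolic monodromy only pins down the index up to the identification of the total space as a nilmanifold; the price is that you must justify the bookkeeping you already flag (local triviality of $F_c^{-1}(\gamma)$, continuity of the gluing in the base point, a continuous choice of the cycle $\beta$, well-definedness modulo $\alpha$), which is exactly the content the paper's cycle-transport argument supplies.

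One small caveat: your parenthetical claim that $\Theta_r$ ``can be chosen to be single-valued on a neighbourhood of $\gamma$'' is not needed and is not true for a general rotationally symmetric reference (an attractive Kepler reference has deflection with nontrivial winding around suitable loops; this is precisely the content of the examples in \cite{Bates2007, DullinWaalkens2008}). It is also harmless: item $(2)$ is defined in terms of the \emph{difference} of the total deflection angles, i.e.\ in terms of $\Delta\Theta$ itself, so only the winding of $\Delta\Theta$ enters and your argument goes through without comparing it to the winding of $\Theta$ alone. I would simply delete that sentence.
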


 \begin{remark}
 By a Dehn twist of index $m$ we mean a homeomorphism of a $2$-torus such that its push-forward map is given by (the conjugacy class of) the matrix 
 $$ M = \begin{pmatrix} 1 & m \\ 0 & 1 
 \end{pmatrix}.
 $$
We note that the scattering manifold $B$ is  a $2$-torus in this case.
 \end{remark}
 \begin{remark}
 The \textit{total deflection angle} of a trajectory $g^t_H(x) = (q(t),p(t))$ is defined by 
$$\Phi = \int\limits_{-\infty}^{+\infty} \dfrac{d\varphi(q(t))}{dt} dt,$$
where $\varphi$ is the polar angle in the configuration $xy$-plane. The \textit{deflection angle} is defined as the difference of the total deflection angles for the original 
and the reference trajectories.
We note that $(2)$ is essentially the definition of scattering monodromy due to \cite{Bates2007, DullinWaalkens2008}.
 \end{remark}

\begin{proof}$(1) \Leftrightarrow (2).$ Let $(a,b)$ be homology cycles on the fiber $F_c^{-1}(\gamma(t_0))$ such that $b$ corresponds to the circle action given by  $L_z$. Transporting the cycles 
 along $\gamma$ we get $b \mapsto b$ and 
$a \mapsto a+mb$ for some integer $m$. But the difference
$$\Phi - \Phi_r = \int\limits_{-\infty}^{+\infty} \dfrac{d\varphi(q(t))}{dt} dt - \int\limits_{-\infty}^{+\infty} \dfrac{d\varphi(q_r(t))}{dt} dt,$$
where $g^t_{H_r} = (q_r(t), p_r(t))$ is a reference trajectory with the same energy and angular momentum,
can be seen as the  rotation number on the fibers of $F_c$. It follows that the variation of $\Phi - \Phi_r$
along $\gamma$ equals $2\pi m$.

$(2) \Leftrightarrow (3).$ The scattering map $S$ allows one to consider the compactified torus bundle
 $${\Pr} \colon {F^{-1}(\gamma)}^c \to S^1 = \mathbb R \cup \{\infty\},$$ where $\mathbb R$  corresponds to the time. The torus bundle considered in $(1)$ has the same total space,
 but is fibered over $\gamma$. Suppose that the monodromy of this bundle is given by the matrix 
  $$ M = \begin{pmatrix} 1 & m \\ 0 & 1 
 \end{pmatrix}.
 $$
 Then the monodromy of ${\Pr} \colon {F^{-1}(\gamma)}^c \to S^1$ is the same, for otherwise the total spaces would be different. 
 %(Here we have also used the invariance with respect to the circle action.) 
 The result follows.
\end{proof}

\begin{remark}
 We note that in the original definition of \cite{DullinWaalkens2008} the potential $V$ is assumed to be repulsive and $V_r = 0.$ In this case the equivalence 
 $(1) \Leftrightarrow (2)$ follows from the results of \cite{Efstathiou2017}. 
\end{remark}

Theorem~\ref{theorem/eq_definitions}  gives three alternative definitions of monodromy in the case of scattering integrable systems in the plane. We observe that for the original definition
in terms of the deflection angle (Definition $(2)$) it is important that the scattering takes plane in the plane. On the other hand, from Section~\ref{sec/scattering} and the present section it follows that Definitions
$(1)$ and $(3)$ are suitable for scattering integrable
systems with many degrees of freedom, such as the Euler problem. Definition $(3)$, similarly to Knauf's degree, can be naturally applied to scattering systems even without integrability.

\section{Scattering in the Euler problem} \label{sec/scattering_2cp}

In this section we study scattering in the Euler problem using the reference Kepler Hamiltonians identified in the previous section. 
We will show that the Euler problem has non-trivial scattering monodromy of two different kinds: purely scattering monodromy and another kind, where both scattering and 
Hamiltonian monodromy are non-trivial.  
The latter kind can be observed only if the number of degrees of freedom $n \ge 3$. Purely Hamiltonian monodromy is also present in the problem; it survives the limiting cases of vanishing $\mu_i$, including the free flow. Scattering monodromy (of both kinds)
is trivial for the free flow. However, scattering monodromy of the second kind
is still present in the Kepler problem.

\subsection{Scattering map}
Let  $F = (H,L_z,G)$ denote the integral map of the Euler problem. 
Let $N$ be a submanifold of
 \begin{equation} \label{NTdefinition}
  NT  = \{(h,l,g) \in \textup{image}(F) \mid F^{-1}(h,l,g) \subset s\}.
 \end{equation}
The  manifold $F^{-1}(N)$ is an invariant submanifold of the phase space $P$, which contains scattering states only. Following the construction in 
Sections \ref{sec/scattering} and \ref{sec/scattering_in_integrable_systems},
we can define the scattering maps $S \colon B \to B$ with respect to $H$, the reference Kepler Hamiltonian $H_r = H_{r_1}$ or $H_r = H_{r_2},$ where
$$H_{r_1} = \frac{1}{2}p^2 - \dfrac{\mu_1-\mu_2}{r_1} \ \mbox{  and } \ H_{r_2} = \frac{1}{2}p^2 - \dfrac{\mu_2-\mu_1}{r_2},$$
and $B = F^{-1}(N) / g_H^t$ as in Subsection~\ref{subsection/scattering_map}.

\begin{remark}
 We recall that the scattering map $S$ is defined by
  $$
 S = (A^{-})^{-1} \circ A^{-}_{r} \circ (A^{+}_{r})^{-1} \circ A^{+},
 $$
 where 
 $$
A^{\pm} = (\hat{p}^{\pm},q_{\bot}^{\pm}) \colon  s^{\pm} / g^t_H \to  AS
\ \mbox{ and } \
A_r^{\pm} = (\hat{p}^{\pm},q_{\bot}^{\pm}) \colon  s_{r}^{\pm} / g^t_H \to  AS
$$
map  $s^{\pm} \subset P$ and $s_{r}^{\pm}$  to the asymptotic states $AS$. Here the index $r$ refers to 
a reference system ($H_{r_1}$ or $H_{r_2}$ in our case). 
\end{remark}

\begin{remark}We note that the potential
$$V = - \dfrac{\mu_1}{r_1} - \dfrac{\mu_2}{r_2}$$
of the Euler problem 
is short-range relative to $\widetilde{V}(q)= - (\mu_1+\mu_2)/\|q\|$, which is a Kepler potential. The reference potentials are Kepler potentials and are therefore rotationally symmetric.
It follows that the decay assumptions of Subsection~\ref{subsection/decay_assumptions} are met.
\end{remark}

\subsection{Scattering monodromy} \label{subsection/scattering_monodromy}
First we consider the case of a gravitational problem ($0 < \mu_2 < \mu_1$) with $H_r = H_{r_2}$ as the reference Kepler Hamiltonian. The other cases can be treated similarly; see
Subsection~\ref{subsection/general_case}.

For sufficiently large $h_0$ the $h = h_0$ slice of the bifurcation diagram has the form shown in 
Fig.~\ref{bifd_spatial_attr_gammas}. 
\begin{figure}[ht]
\begin{center}
\includegraphics[width=0.93\linewidth]{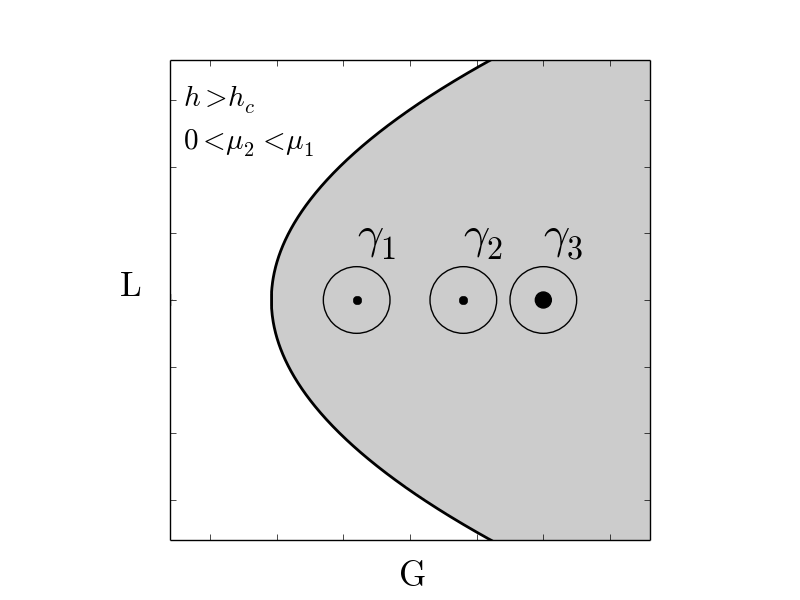}
\end{center}
\caption{Energy slice of the bifurcation diagram for the spatial Euler problem, attractive case. } 
\label{bifd_spatial_attr_gammas}
\end{figure}
Let $\gamma_i, i = 1, 2,3,$ be a simple closed curve in $$NT_{h_0} = \{(h,g,l) \in NT \mid h = h_0\}$$ that encircles the critical line $\ell_i$, where
\begin{multline*} 
\ell_1 = \{g = h+(\mu_2-\mu_1), \ l = 0\}, \  \ell_2 = \{g = h+(\mu_1-\mu_2), \ l = 0\} \ \mbox{ and } \ \\  \ell_3 = \{g = h+(\mu_1+\mu_2), \ l = 0\}.
\end{multline*}
For each $\gamma_i$, consider the torus bundle $F_{i} \colon E_{i} \to \gamma_i$, where the total space $E_i$ is obtained by gluing the ends of the fibers of $F$ over $\gamma_i$ via the scattering map $S$. 
We recall that \textit{scattering monodromy} along $\gamma_i$ with respect to $H_r$ is defined as the usual monodromy of the torus 
bundle $F_{i} \colon E_{i} \to \gamma_i$; see Definition~\ref{definition/scattering_monodromy} and Appendix~\ref{appendix/Hamiltonian_monodromy}.

\begin{remark} \label{remark/alternative_definition}
 Alternatively, one can define $F_{i} \colon E_{i} \to \gamma_i$ by gluing the fibers of the original and the reference integral maps at infinity. Both definitions are equivalent in the sense
 that the monodromy of the resulting torus bundles are the same.
\end{remark}

Consider a starting point $\gamma_i(t_0) \in \gamma_i$ in the region where  $l > 0$. 
We choose a basis $(c_\xi,  c_\eta, c_\varphi)$ of the first homology group $\textup{H}_1(F_{i}^{-1}(\gamma_i(t_0))) \simeq \mathbb Z^3$ as follows.
The cycle $c_\xi = c^{o}_\xi \cup c^{r}_\xi$ is obtained by gluing the non-compact $\xi$-coordinate lines $c^{o}_\xi$ for the original and $c^{r}_\xi$ for the reference systems at infinity. In other words,
for
we glue the lines
$$p^2_\xi = \dfrac{(\xi^2 - 1)(2h\xi^2 + 2(\mu_1+\mu_2)\xi - 2g) - l^2}{(\xi^2-1)^2}$$
on  $F^{-1}(\gamma_i(t_0)), \gamma_i(t_0) = (h,g,l)$,
and 
$$p^2_\xi = \dfrac{(\xi^2 - 1)(2h\xi^2 + 2(\mu_2-\mu_1)\xi - 2g) - l^2}{(\xi^2-1)^2}$$
on the reference fiber $F_r^{-1}(\gamma_i(t_0))$  at the limit points $\xi = \infty$, $p_\xi = \pm \sqrt{2h}.$ 
The cycles  $c_\eta$ and $c_\varphi$  are such that their projections onto the configuration space coincide with coordinate lines of $\eta$ and $\varphi$, respectively. In other words, 
the cycle $c_\eta$ on $F^{-1}(\gamma_i(t_0))$ is given by
$$p^2_\eta = \dfrac{(1-\eta^2)(-2h\eta^2 - 2(\mu_1-\mu_2)\eta - 2g) - l^2}{(1-\eta^2)^2}$$
and  $c_\varphi$ is an orbit of the circle action given by the Hamiltonian flow of the momentum $L_z$.
We have the following result.

\begin{theorem} \label{theorem/main}
 The monodromy matrices $M_i$ of $E_i \to \gamma_i$ with respect to the natural basis $(c_\xi, c_\eta, c_\varphi)$ have the  form 
 $$  
  M_1  = \begin{pmatrix}
  1 & 0 & 0 \\
  0 & 1 & 1 \\
  0 & 0 & 1
 \end{pmatrix}, \  M_2  = \begin{pmatrix}
  1 & 0 & -1 \\
  0 & 1 & 1 \\
  0 & 0 & 1
 \end{pmatrix} \ \mbox{ and }  \
  M_3  = \begin{pmatrix}
  1 & 0 & 1 \\
  0 & 1 & 0 \\
  0 & 0 & 1
 \end{pmatrix}.
 $$
\end{theorem}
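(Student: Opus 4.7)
The plan is to exploit the separability of the Euler problem in elliptic coordinates together with the global Hamiltonian $S^1$-action generated by $L_z$, and to reduce the computation of each $M_i$ to a winding-number calculation for a rotation-in-$\varphi$ function.

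First, the compactified regular fiber over a scattering value $(h,l,g)$ is a $3$-torus whose first homology is generated exactly by $(c_\xi, c_\eta, c_\varphi)$ as defined in the statement. The cycle $c_\varphi$ is the orbit of the globally defined $L_z$ circle action on $P$, hence invariant under parallel transport; this forces the bottom row of each $M_i$ to be $(0,0,1)$. Moreover, the $\eta$- and $\xi$-cycles live in separated one-dimensional motions whose defining polynomials
$$P_\xi(\xi;h,l,g) = (\xi^2-1)(2h\xi^2 + 2(\mu_1+\mu_2)\xi - 2g) - l^2$$
and
$$P_\eta(\eta;h,l,g) = (1-\eta^2)(-2h\eta^2 - 2(\mu_1-\mu_2)\eta + 2g) - l^2$$
depend only on $(h,l,g)$. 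Since the $\eta$-motion is confined to a fixed $\eta$-interval during transport and similarly for the $\xi$-motion, the transported $c_\eta$ still lies at the initial $\xi$-value and can only differ from $c_\eta$ by a multiple of $c_\varphi$; the analogous argument excludes mixing in the reverse direction. Consequently each $M_i$ is of the form
$$M_i = \begin{pmatrix} 1 & 0 & a_i \\ 0 & 1 & b_i \\ 0 & 0 & 1 \end{pmatrix},$$
and it remains to determine the integers $a_i, b_i$.

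Second, I would introduce the separated time $\tau$ by $dt=(\xi^2-\eta^2)\,d\tau$. The flow then decouples, and
$$\frac{d\varphi}{d\tau} = \frac{l}{\xi^2-1} + \frac{l}{1-\eta^2}.$$
The $\varphi$-increment along one period of $c_\eta$ in $\tau$-time is a hyperelliptic integral along the $\eta$-oval plus a $\xi$-dependent contribution; denote the total by $2\pi W_\eta(h,l,g)$. Analogously, let $2\pi W_\xi(h,l,g)$ be the $\varphi$-increment along the compactified $c_\xi$, with the contribution from the reference Kepler $H_{r_2}$ subtracted at infinity. In the torus bundle $E_i\to\gamma_i$ the cycles $c_\eta$ and $c_\xi$ close back to themselves only up to multiples of $c_\varphi$ equal to the total variations of $W_\eta$ and $W_\xi$ around $\gamma_i$, so
$$b_i = \oint_{\gamma_i} dW_\eta, \qquad a_i = \oint_{\gamma_i} dW_\xi.$$

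Third, I would evaluate these variations by a residue argument at each $\ell_i$. The critical line $\ell_i$ is precisely where an endpoint of the $\eta$- or $\xi$-interval pinches one of the poles $\eta=\pm 1$ or $\xi=1$ at which the $\varphi$-circle degenerates (these points lie on the $z$-axis in configuration space). Near such a pinch $d\varphi/d\tau\sim l/(\mathrm{distance})^2$, and a local focus-focus model yields a jump of $\pm 1$ in the relevant winding per encirclement of $\ell_i$. Applied around $\ell_1$ (pinch $\eta_-=-1$), this yields $b_1=1$ and $a_1=0$, since the reference $H_{r_2}$ has no singularity at $\eta=-1$ so the compactification of $c_\xi$ is unaffected. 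Around $\ell_3$ (pinch $\xi_{\min}=1$) one obtains $b_3=0$ and $a_3=1$, the pinch being of the $\xi$-oval and the reference again not participating. Around $\ell_2$ (pinch $\eta_+=1$) one still has $b_2=1$, but now the reference Kepler, whose center $o_2$ lies on the axis $\eta=1$, has its \emph{own} $\varphi$-singularity at the same pinch and contributes an opposite winding to the reference factor of $c_\xi$, giving $a_2=-1$. These values reproduce the three matrices in the statement.

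The hardest step will be the mixed case $\ell_2$: one must set up the reference Kepler $H_{r_2}$ near its own pole at $o_2$ and compute the relative winding between the original and the reference factors of $c_\xi$ in the coalescent limit, with careful attention to orientations so as to obtain $-1$ rather than $+1$ or $0$. A secondary subtlety is to justify rigorously that $c_\xi$ and $c_\eta$ do not mix with one another under transport, which is where the independence of $P_\xi$ and $P_\eta$ on the complementary coordinate is crucial.
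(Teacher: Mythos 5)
Your overall strategy is the one the paper actually uses: the bottom row $(0,0,1)$ comes from the global $L_z$ circle action, the absence of $\xi$--$\eta$ mixing comes from the separation, and the remaining integers are obtained as total variations around $\gamma_i$ of rotation-number-type quantities, computed by residues at the coordinate singularities, with the reference Kepler flow subtracted at infinity to make the $\xi$-quantity finite. In the paper these quantities appear as the one-sided limits $\lim_{l\to\pm0}\partial_l I_\eta$ and $\lim_{l\to\pm0}\partial_l I_\xi^{mod}$ of derivatives of (modified) action integrals, which coincide with your winding functions $W_\eta$, $W_\xi$ up to normalization; the paper's bump-function $1$-form $\alpha$ plays the role of your ``subtract the reference at infinity''. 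So the skeleton of your argument matches.

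The genuine gap is in the third step, where you assign the residues to the critical lines. The first-row entries $a_i$ are \emph{not} governed by the pinches at $\eta=\pm1$ nor by the position of the reference centre relative to the ray $\eta=1$: the cycle $c_\xi$ is a pure $(\xi,p_\xi)$-cycle, and the only singular contribution to $\partial_l I_\xi^{mod}$ as $l\to0$ comes from the pole at $\xi=1$, i.e.\ from whether the $\xi$-turning point of the original (resp.\ reference) motion collapses onto the inter-centre segment. That happens precisely when $2h\xi^2+2(\mu_1+\mu_2)\xi-2g$ (resp.\ its reference analogue with $\mu_1+\mu_2$ replaced by $\pm(\mu_1-\mu_2)$) is positive at $\xi=1$, so the relevant thresholds are $g=h+\mu_1+\mu_2$ for the original system (the line $\ell_3$) and $g=h\pm(\mu_1-\mu_2)$ for the reference Kepler Hamiltonians (the lines $\ell_2$ or $\ell_1$). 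Each accessible pole contributes $\mp\tfrac12$ to the one-sided limit, and $a_i$ is the change, across $\ell_i$, of the difference of the original and reference contributions. Your explanation of $a_2=-1$ via ``the reference's own $\varphi$-singularity at the pinch $\eta_+=1$'' and of $a_1=0$ via ``no singularity of $H_{r_2}$ at $\eta=-1$'' does not describe this mechanism — the reference's $\eta$-equation is in fact identical to the original's, and nothing about $\eta=\pm1$ enters the $c_\xi$ computation — so the values $a_1=0$, $a_2=-1$, $a_3=1$ are asserted rather than derived. Relatedly, invoking ``a local focus-focus model'' is not available at $\ell_3$, where the critical fibre is a pinched cylinder times a circle rather than (a suspension of) a focus-focus fibre; the uniform tool is the residue computation of the half-integer jumps of $\partial_l I_\eta$ and $\partial_l I_\xi^{mod}$ at $l=0$, carried out separately on either side of each $\ell_i$. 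To complete your proof you would need to replace the heuristic pinch-counting by this explicit residue bookkeeping for both the original and the reference $\xi$-motions.
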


\begin{proof}
{\bf Case 1}, loop $\gamma_1$. First we note that the cycle $c_\varphi$ is preserved under the parallel transport along $\gamma_1$. This follows from the fact that 
$L_z$ generates a free fiber-preserving circle action on $E_i$. The cycles 
$c_\xi$ and $c_\eta$ can be naturally transported only in the regions where $l \ne 0$. We thus need to understand what happens at the critical plane $l = 0$. 

Let  $R > 1$ be a sufficiently large number. Then 
$$E_{1,R} = \{x \in E_1 \mid \xi(x) > R\}$$
has exactly two connected components, which we denote by $E^+_{1,R}$ and $E^-_{1,R}$.
We define a $1$-form $\alpha$ on (a part of) $E_i$ by the formula
$$
\alpha = pdq - \chi(\xi)p_\xi(h,g,l,\xi)d\xi,
$$
where $\chi(\xi)$ is a bump function such that 

(i) $\chi(\xi) = 0$ when $\xi < R;$ 

(ii) $\chi(\xi) =1$ when $\xi > 1 + R.$ \\ 
The square root function $p_\xi(h,g,l,\xi)$ is assumed to be positive on $E^+_{1,R}$ and negative on $E^-_{1,R}$. 
By construction, the $1$-form
$\alpha$ is well-defined and smooth on $E_i$ outside collision points. Since 
$$d\alpha = dp \wedge dq = - \omega \ \mbox{ on } \  F^{-1}(\gamma_i) \cup F_r^{-1}(\gamma_i) \subset E_i,$$
we have that
$d\alpha = 0$ on each fiber of $F_i$.

Consider the modified actions with respect to the form $\alpha$:
$$I_\varphi = \dfrac{1}{2\pi}\int\limits_{c_\varphi} \alpha, \ I_\eta = \dfrac{1}{2\pi}\int\limits_{c_\eta} \alpha \ \ \mbox{ and } \ \
I^{mod}_\xi = \dfrac{1}{2\pi}  \int\limits_{c_\xi} \alpha.$$
The modified actions are well defined and, in view of $d\alpha = 0$, depend only on the homology classes of $c_\xi,c_\eta$ and $c_\varphi.$ It follows that
$I_\varphi$ and $I_\eta$ coincide with the `natural' actions (defined as the integrals over the usual $1$-form  $pdq$). We note that the `natural' $\xi$-action 
$$I_\xi = \dfrac{1}{2\pi}  \int\limits_{c_\xi} pdq$$
diverges, cf. \cite{DullinWaalkens2008}.
From the continuity of the modified actions at $l = 0$ it follows that the corresponding scattering monodromy matrix has the form
 $$  
  M_{1}  = \begin{pmatrix}
  1 & 0 & m_{1} \\
  0 & 1 & m_{2} \\
  0 & 0 & 1
 \end{pmatrix}.
 $$
Since the modified actions do not have to be smooth at $l = 0$, the integers $m_1$ and $m_2$ are not necessarily zero.
 In order to compute these integers  we need to compare the
 derivatives 
$\partial_l I_\eta$ and $\partial_l I_\xi$  at $l \to \pm 0.$ 
 A computation of the corresponding 
residues gives
\begin{equation*}
 \lim_{\l \to {\pm 0}} \partial_l I_\eta = \lim_{\l \to {\pm 0}} \dfrac{1}{2\pi}  \partial_l\int\limits_{c_\eta} pdq  = \begin{cases}
0,  & \mbox{ when }  \ g < h + \mu_2-\mu_1,\\
\mp 1/2, & \mbox{ when } \ \mu_2-\mu_1< g - h < \mu_1-\mu_2,
\end{cases}
\end{equation*}
and 
\begin{multline*} %\label{equation/limits_l0}
 \lim_{\l \to {\pm 0}} \partial_lI^{mod}_\xi = \lim_{\l \to {\pm 0}} \left(\dfrac{1}{2\pi}  \partial_l\int\limits_{c^{o}_\xi} pdq - \dfrac{1}{2\pi}  \partial_l\int\limits_{c^{r}_\xi} pdq\right) - \\
 \lim_{\l \to {\pm 0}} \dfrac{1}{2\pi} \int\limits_{c_\xi} \chi(\xi)p_\xi(h,g,l,\xi)d\xi = 0
\end{multline*}
(for the two ranges of $g$). It follows that $m_1 = 0$ and $m_2 = 1$. 

%Consider the natural actions 
%$$I_\varphi = L_z = \dfrac{1}{2\pi}\int\limits_{c_\varphi} p_\varphi d\varphi \ \mbox{ and } \ I_\eta = \dfrac{1}{2\pi}\int\limits_{c_\eta} p_\eta d\eta.$$
%(These actions are the same for the original and for the reference systems.)
%Let $D_1$ be a $2$-disk in $NT_{h_0}$ with $\partial D_1 = \gamma_1.$  
%Consider the quotient space $F^{-1}(D_1)/ g^t_H$  with respect to the Hamiltonian flow of $H$. Observe that the actions $I_\varphi$ and $I_\eta$
%descend to this quotient. Moreover, the momentum  $L_z = I_\varphi$ generates a global circle action on
%$F^{-1}(D_1)/ g^t_H$ with exactly one isolated fixed point (of index $1$). It follows that $M_1$ is of the form 
% $$  
%  M_1  = \begin{pmatrix}
%  1 & m_1 & m_2 \\
%  m_3 & 1 & 1 \\
%  0 & 0 & 1
% \end{pmatrix}.
% $$
%(Here the bottom-right $2\times2$ gives the monodromy of the $2$-torus bundle $F^{-1}(\gamma_1) / g^t_H \to \gamma_1$ and the lower row corresponds to the circle action of $L_z$.)
%It is left to compute the integers $m_1, m_2$ and $m_3$. For this we need to understand what happens with  the critical plane $l = 0$. 
%Observe that $\Pr \colon F^{-1}(\gamma_1) \to F^{-1}(\gamma_1)/ g^t_H$ is a trivial $\mathbb R$-bundle. Consider a trivialization 
%$$s = s(x,t) \colon F^{-1}(\gamma_1)/ g^t_H \times \mathbb R \to F^{-1}(\gamma_1)$$
%of this bundle. 

{\bf Case 2},  loop $\gamma_2$. This case is similar to {\bf Case 1}. The corresponding limits are given by
\begin{equation*}
 \lim_{\l \to {\pm 0}} (\partial_l I_\eta, \partial_lI^{mod}_\xi)  = \begin{cases}
(\mp 1/2, 0),  & \mbox{ when } \  \mu_2-\mu_1< g - h < \mu_1-\mu_2,\\
(\mp 1, \pm 1/2), & \mbox{ when } \ \mu_1-\mu_2< g - h < \mu_1+\mu_1.
\end{cases}
\end{equation*}

{\bf Case 3},  loop $\gamma_3$. The computation in this case is also similar to {\bf Case 1}. The corresponding limits are given by
\begin{equation*}
 \lim_{\l \to {\pm 0}} (\partial_l I_\eta, \partial_lI^{mod}_\xi)  = \begin{cases}
(\mp 1, \pm 1/2),  & \mbox{ when } \  \mu_1-\mu_2< g - h < \mu_1+\mu_2,\\
(\mp 1, 0), & \mbox{ when } \ h + \mu_1+\mu_2< g.
\end{cases}
\end{equation*}
\end{proof}

\begin{remark}
 One difference between {\bf Case 3} and the other cases is the topology of the critical fiber, around which scattering monodromy is defined. In {\bf Case 3} the critical
 fiber is the product of a pinched cylinder and a circle, whereas in the other cases 
 it is the product  of a pinched torus and a real line. 
 This implies, in fact, that {\bf Case 3} is purely scattering, whereas in the other cases Hamiltonian monodromy is present; see Subsection~\ref{sec/topology} for details.
 \end{remark}
 
 \renewcommand{\proofname}{Proof for Case 3 of Theorem~\ref{theorem/main}}
 
 \begin{remark}
 Theorem~\ref{theorem/main} admits the following geometric proof in the purely scattering case.
 \begin{proof}
 The action 
$$
 I'_\eta = \left.
  \begin{cases}
    I_\eta, & \text{if } l \ge 0 \\
    I_\eta - 2l , & \text{if } l < 0.
  \end{cases} \right.
  $$
  is smooth and globally defined (over $\gamma_3$). Moreover, the corresponding  circle action extends to a free action in $F_3^{-1}(D_3)$, where 
$D_3 \subset NT_{h_0}$ is a $2$-disk such that $\partial D_3 = \gamma_3$. Since there is also a circle action given by $I_{\varphi}$, 
the result can be also deduced from the general theory developed in \cite{EfstathiouMartynchuk2017, Martynchuk2017}.
\end{proof}
We note that from the last proof it follows that the choice of a reference Kepler Hamiltonian
does not affect the result in the purely scattering case. This agrees with the point of view presented recently in \cite{Efstathiou2017} for two degree of freedom systems
with focus-focus singularities. For the curves $\gamma_1$ and $\gamma_2$,  it is important which of the two 
reference Kepler Hamiltonians is chosen; see Subsection~\ref{subsection/general_case}. 
\end{remark}

\renewcommand{\proofname}{Proof}
As a corollary, we get the following result for the scattering map in the purely scattering case  of the curve $\gamma_3$.
\begin{theorem} \label{theorem/dehntwist}
 The scattering map $S \colon B_3 \to B_3,$ where $B_3 = F^{-1}(\gamma_3)/g^t_H,$ is a Dehn twist. The push-forward map is conjugate in
 $SL(3,\mathbb Z)$ to 
 $$
 S_{\star} =
  \begin{pmatrix}
    1 & 0 & 1 \\
    0 & 1 & 0 \\
    0 & 0 & 1
  \end{pmatrix}.
 $$
\end{theorem}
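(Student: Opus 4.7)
The theorem is the three-degree-of-freedom analogue of the planar equivalence $(1) \Leftrightarrow (3)$ in Theorem~\ref{theorem/eq_definitions}. The plan is to generalize that argument and feed in the monodromy matrix $M_3$ already computed in Case 3 of Theorem~\ref{theorem/main}.

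First I would pin down the relevant topology. Because $\gamma_3$ is purely scattering (the Hamiltonian monodromy around $\ell_3$ is trivial, as noted in the remark after Theorem~\ref{theorem/main}), the compactified Liouville fiber $F^{-1}(\gamma_3(t_0))^c$ is a genuine $3$-torus with homology basis $(c_\xi,c_\eta,c_\varphi)$. Reading off the action of $M_3$ on the $(c_\eta,c_\varphi)$-subspace shows that this sub-monodromy is the identity, so $B_3 = F^{-1}(\gamma_3)/g^t_H$ is itself a $3$-torus. Because $H_r$ is a reference Hamiltonian (Definition~\ref{definition/reference}), the scattering map $S \colon B_3 \to B_3$ is fiber-preserving over $\gamma_3$.

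Next I would identify the compactified total space $F^{-1}(\gamma_3)^c$ with the mapping torus $MT(S) = (B_3 \times [0,1])/((x,1)\sim(S(x),0))$. Picking a section $\sigma \colon B_3 \to F^{-1}(\gamma_3)$ and trivializing $F^{-1}(\gamma_3) \simeq B_3 \times \mathbb{R}$ along the $g^t_H$-flow, the rule gluing the $t = +\infty$ ends of orbits to the $t = -\infty$ ends of their $S$-images produces exactly $MT(S)$. Thus $F^{-1}(\gamma_3)^c$ carries two natural $3$-torus bundle structures over a circle: one over $\gamma_3$ whose monodromy is the scattering monodromy $M_3$ of Theorem~\ref{theorem/main}, and one over the time circle whose monodromy is the push-forward $S_\star$ acting on $H_1(T^3) \simeq \mathbb{Z}^3$.

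Finally I would conclude that $S_\star$ and $M_3$, viewed as monodromies of two $T^3$-fibrations on the same four-manifold, are conjugate in $SL(3,\mathbb{Z})$. The main obstacle will be making this last step rigorous: two different $T^n$-bundle structures on the same total space can a priori have inequivalent monodromies if base and fiber are interchanged in a non-trivial way. I anticipate two routes. The first is to appeal to the classification of $T^n$-bundles over $S^1$ together with the structure of $H_1(F^{-1}(\gamma_3)^c)$ and its torsion, showing the two monodromies lie in the same conjugacy class. The second, cleaner route is the geometric one indicated in the remark preceding the theorem: use the free $T^2$-action generated by the globally defined $I'_\eta$ and $I_\varphi = L_z$ on $F^{-1}(D_3)$ (with $\partial D_3 = \gamma_3$) to reduce the scattering problem to one degree of freedom, where the Dehn twist statement is immediate, and then lift back by the $T^2$-action to recover the claimed push-forward matrix on the $(c_\xi,c_\eta,c_\varphi)$-basis.
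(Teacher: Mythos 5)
Your proposal is correct and follows essentially the same route as the paper's proof: the compactified total space ${F^{-1}(\gamma_3)}^c$ carries two $T^3$-bundle structures, one over $\gamma_3$ with monodromy $M_3$ from Theorem~\ref{theorem/main} and one over the time circle with monodromy $S_\star$, and these must be conjugate in $SL(3,\mathbb Z)$ because the total spaces coincide. The paper simply asserts this last step (``for otherwise the total spaces would be different''), whereas you correctly flag it as the point needing justification and indicate the same two ways to close it (the classification of $T^n$-bundles over $S^1$, or the global $T^2$-action used in the geometric proof of Case 3).
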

\begin{proof}
 The proof is similar to the proof of the equivalence $(2) \Leftrightarrow (3)$ given in Theorem~\ref{theorem/eq_definitions}. The scattering map $S$ allows one to consider the compactified torus bundle
 $${\Pr} \colon {F^{-1}(\gamma_3)}^c \to S^1 = \mathbb R \cup \{\infty\},$$ where $\mathbb R$  corresponds to the time. The torus bundle
 $F_3 \colon E_3 \to \gamma_3$ has the same total space,
 but is fibered over $\gamma_3$. 
 By Theorem~\ref{theorem/main}, the monodromy of the bundle $F_3 \colon E_3 \to \gamma_3$ is given by the matrix 
  $$ M = \begin{pmatrix} 1 & 0 & 1 \\ 0 & 1 & 0 \\ 0 & 0 & 1 
 \end{pmatrix}.
 $$
 Then the monodromy of the first bundle ${\Pr} \colon {F^{-1}(\gamma_3)}^c \to S^1$ is the same, for otherwise the total spaces would be different. 
 %(Here we have also used the invariance with respect to the circle action.) 
 The result follows.
\end{proof}

\begin{remark}
It follows from the proof and Subsection~\ref{subsection/general_case} that Theorem~\ref{theorem/dehntwist} holds for any $\mu_i \ne 0$ and
for any regular closed curve $\gamma \subset NT$ such that 
\begin{enumerate}
\item[1.] The energy value $h$ is positive on $\gamma$;
\item[2.] $\gamma$ encircles the critical line $\{g = h+\mu_1+\mu_2, \ l = 0\}$ exactly once and does not encircle any other  line of critical values;
\item[3.] $\gamma$ does not cross critical values of $F$.
\end{enumerate}
\begin{figure}[ht]
\begin{center}
\includegraphics[width=0.93\linewidth]{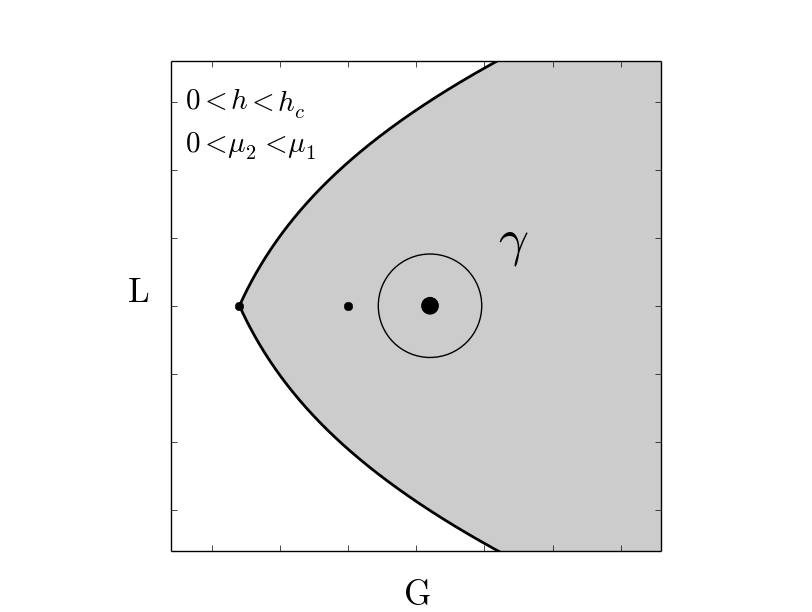}
\end{center}
\caption{Energy slice of the bifurcation diagram for the spatial Euler problem, attractive case.  } 
\label{bifd_spatial_attr_gamma}
\end{figure}
 It can be shown that such a curve $\gamma$ always exists; an example is given in Fig.~\ref{bifd_spatial_attr_gamma}. 
We note that the third condition can be weakened in the case  $-\mu_1 < \mu_2 < 0$. 
In this case the attraction of $\mu_1$ dominates the repulsion of $\mu_2$ and, as a result, 
bound motion coexists with unbound motion for a range of positive energies. Instead of $F^{-1}(\gamma)$ one may consider its unbounded component.
\end{remark}

%Theorem~\ref{theorem/dehntwist} shows that the gravitational Euler problem has scattering monodromy. The general case of arbitrary strengths 
%will be addressed in Subsection~\ref{subsection/general_case}.

\subsection{Topology} \label{sec/topology}

As we have noted before, alongside scattering monodromy, the Euler problem admits also another type of invariant --- Hamiltonian monodromy. Here we consider the generic case of
$|\mu_1| \ne |\mu_2| \ne 0$ in the case of positive energies. The case of negative energies 
is similar --- it has been discussed in detail in \cite{WaalkensDullinRichter04}. The critical cases can be easily computed from the generic case by considering curves that encircle
more than one of the  singular lines 
\begin{multline*}
\ell_1 = \{g = h+(\mu_2-\mu_1), \ l = 0\}, \  \ell_2 = \{g = h+(\mu_1-\mu_2), \ l = 0\} \ \mbox{ and } \ \\  \ell_3 = \{g = h+(\mu_1+\mu_2), \ l = 0\}.
\end{multline*}
Let $\gamma_i$ be  a closed curve   that encircles only the critical line $\ell_i$; see Fig.~\ref{bifd_spatial_attr_gamma}. The fibration $F \colon F^{-1}(\gamma_i) \to \gamma_i$
is a  $T^2 \times \mathbb R$-bundle. 
The following theorem shows that  the Hamiltonian monodromy (see Appendix~\ref{appendix/Hamiltonian_monodromy}) is non-trivial  along the curves 
$\gamma_1$ and $\gamma_2$ 
and is trivial along $\gamma_3$.
\begin{theorem}
 The Hamiltonian monodromy of $F \colon F^{-1}(\gamma_i) \to \gamma_i, \ i = 1,2$, is conjugate in $SL(2,\mathbb Z) \subset SL(3,\mathbb Z)$ to 
$$
 M =
  \begin{pmatrix}
    1 & 0 & 0 \\
    0 & 1 & 1 \\
    0 & 0 & 1
  \end{pmatrix}.
 $$
 Here the right-bottom $2 \times 2$ block acts on $T^2$ and the left-top $1 \times 1$ block acts on $\mathbb R$.
 \end{theorem}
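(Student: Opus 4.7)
The plan is to reduce the computation to the compact $T^2$-factor of the fibers and then reuse the residue analysis already performed in the proof of Theorem~\ref{theorem/main}. First, I would identify the fiber structure: over any regular value in $\gamma_i$ the fiber of $F$ is diffeomorphic to $T^2 \times \mathbb R$, with the compact $T^2$ generated by $c_\eta$ (bounded $\eta$-motion on $[-1,1]$) and $c_\varphi$ (an orbit of the global $L_z$-action), and the $\mathbb R$ factor parametrized by the non-compact $\xi$-direction. Under this splitting the monodromy factors as $\mathrm{id}_{\mathbb R} \oplus M_{T^2}$, because the non-compact direction is globally trivialized by the sign of $p_\xi$ at infinity. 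This yields the top-left $1 \times 1$ identity block. The cycle $c_\varphi$ is preserved as well, since the Hamiltonian flow of $L_z$ defines a free fiber-preserving $S^1$-action on $F^{-1}(\gamma_i)$, so that $M_{T^2}$ is upper-triangular with unit diagonal in the basis $(c_\eta, c_\varphi)$.

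It remains to compute the monodromy index for $c_\eta$. My plan is to use the action $I_\eta(h,l,g) = (2\pi)^{-1}\oint_{c_\eta} p\,dq$, which is single-valued and smooth on each component of $\gamma_i \setminus \{l=0\}$. The loop $\gamma_i$ meets the singular plane $\{l=0\}$ in exactly two points, and the Hamiltonian monodromy index of $c_\eta$ is the total signed defect in $I_\eta$ picked up at these crossings, measured in units of $2\pi I_\varphi = 2\pi l$. These defects can be read off directly from the residue limits of $\partial_l I_\eta$ computed in Cases~1 and 2 of the proof of Theorem~\ref{theorem/main}. For $\gamma_1$, one crossing yields a trivial contribution (since $\lim_{l \to \pm 0}\partial_l I_\eta = 0$ there) and the other contributes $\pm 1$. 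For $\gamma_2$, the two crossings, with residues $\mp 1/2$ and $\mp 1$, combine with opposite orientations relative to the direction of traversal and likewise give a net index of~$1$.

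The main obstacle is the sign bookkeeping in the case of $\gamma_2$: the two crossings have defects of different magnitudes ($1$ and $2$), and one must verify that they enter with orientations yielding a net index of $1$ rather than $3$. A cleaner alternative, which I would use as a sanity check, is topological. As noted in the remark preceding Theorem~\ref{theorem/dehntwist}, for $i = 1, 2$ the critical fiber enclosed by $\gamma_i$ is a pinched torus times $\mathbb R$, so the $L_z$-reduction produces a focus-focus singularity and the standard focus-focus monodromy $\bigl(\begin{smallmatrix}1 & 1\\ 0 & 1\end{smallmatrix}\bigr)$ on $T^2$ lifts to the stated matrix on $T^2 \times \mathbb R$, up to conjugation in $SL(2,\mathbb Z) \subset SL(3,\mathbb Z)$.
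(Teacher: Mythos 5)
Your primary route is essentially the paper's: the authors also note that the theorem ``follows from the proof of Theorem~\ref{theorem/main},'' i.e.\ the Hamiltonian monodromy is the restriction of the scattering monodromy matrices $M_1$, $M_2$ to the compact $(c_\eta,c_\varphi)$ block, which is read off from the jumps of $\partial_l I_\eta$ across $l=0$ exactly as you describe; the orientation bookkeeping you flag (net index $1$ rather than $3$ for $\gamma_2$) is precisely what the residue table in that proof settles, and the paper disposes of the residual sign/conjugacy ambiguity by invoking the global $S^1$-action of $L_z$ (citing Cushman--Du{\l}{\l}in) to conclude that the matrix is the same for $\ell_1$ and $\ell_2$. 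The one genuine inaccuracy is in your ``sanity check'': the reduction that produces a compact singular $T^2$-fibration with a focus-focus point is \emph{not} reduction by $L_z$ but reduction of the energy surface $H^{-1}(h)$ by the flow $g_H^t$. Quotienting by the $H$-flow collapses the non-compact $\xi$-direction, turning each fiber $T^2\times\mathbb R$ into a $T^2$ and the critical fiber (pinched torus $\times\,\mathbb R$) into a pinched torus over a disk $D_i$ with $\partial D_i=\gamma_i$, after which Lerman--Matveev--Zung applies. Reduction by $L_z$ would instead freeze the value of $l$ --- the very coordinate along which $\gamma_i$ must cross zero --- and would leave non-compact cylinder fibers, so no focus-focus fibration arises in that picture. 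With that correction your secondary argument coincides with the paper's stated ``independent proof.''
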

 \begin{proof}
 The result follows from the proof of Theorem~\ref{theorem/main}. For completeness, we give an independent proof below.
 
  After the reduction of the surface $H^{-1}(h)$ with respect to the flow $g^t_H$ we get a singular $T^2$ 
torus fibration over a disk $D_i, \ \partial D_i = \gamma_i,$ with exactly one focus-focus point.
The result then follows 
from \cite{Lerman1994, Matveev1996, Zung1997}. This argument applies to both of the  lines $\ell_1$ and $\ell_2$. Since the flow of 
$L_z$  gives a global circle action,
the monodromy matrix $M$ is the same in both cases; see
\cite{Cushman2002}. 
\end{proof}

\begin{theorem}
 The Hamiltonian monodromy of $F \colon F^{-1}(\gamma_3) \to \gamma_3$ is trivial.
 \end{theorem}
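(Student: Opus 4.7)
My plan is to reduce the computation of Hamiltonian monodromy along $\gamma_3$ to the topology of the compact torus factor of the regular fibers of $F$, using the fact that these fibers have the form $T^2 \times \mathbb{R}$ and that $L_z$ generates a globally defined free Hamiltonian $S^1$-action. The compact torus is spanned by the cycles $(c_\eta, c_\varphi)$, so Hamiltonian monodromy is precisely the action of parallel transport on $H_1(T^2)$, and it suffices to show that both $c_\eta$ and $c_\varphi$ are preserved under transport around $\gamma_3$.

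First, $c_\varphi$ is automatically preserved: it is the orbit of the global $S^1$-action generated by $L_z$, this action preserves every fiber of $F$, and hence $c_\varphi$ is canonically defined on every regular fiber and is sent to itself by any parallel transport. Second, I would argue that $c_\eta$ is also preserved. The quickest route is to read this off from the scattering-monodromy matrix
$$
M_3 = \begin{pmatrix} 1 & 0 & 1 \\ 0 & 1 & 0 \\ 0 & 0 & 1 \end{pmatrix}
$$
obtained in Theorem~\ref{theorem/main}: its second row $(0,1,0)$ says exactly that $c_\eta \mapsto c_\eta$ under transport around $\gamma_3$. The only non-trivial entry of $M_3$ is the shift of the non-compact cycle $c_\xi$ by $c_\varphi$, which is purely scattering and by construction does not contribute to Hamiltonian monodromy, i.e.\ to the action on $H_1$ of the compact $T^2$-factor.

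Conceptually, the remark following Theorem~\ref{theorem/main} already explained this dichotomy: the critical fiber over $\ell_3$ is the product of a pinched cylinder and a circle, whereas the critical fibers over $\ell_1$ and $\ell_2$ are pinched tori times $\mathbb{R}$. The pinching over $\ell_3$ occurs in the non-compact $\xi$-direction, so there is no compact vanishing cycle in the reduced picture, and the Lerman--Matveev--Zung focus-focus mechanism that produced Hamiltonian monodromy over $\ell_1$ and $\ell_2$ is absent here. Equivalently, after reducing by the global $L_z$-action one obtains a two-degree-of-freedom integrable fibration over a disk $D_3$ with $\partial D_3 = \gamma_3$ whose only interior singular fiber is a pinched cylinder rather than a focus-focus pinched torus, and such a singular fiber carries trivial reduced monodromy.

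The main subtlety is justifying rigorously the topological description of the critical fiber as a pinched cylinder times a circle. This requires inspecting the equations for $p_\xi^2$ and $p_\eta^2$ at $(h_0, h_0 + \mu_1 + \mu_2, 0)$, identifying the double root in $\xi$ that produces the pinch point in the non-compact direction while $c_\eta$ remains a regular closed $\eta$-cycle, and checking that the $L_z$-circle action stays free on this fiber (which is the case since collisions require $L_z = 0$ only at isolated instants and the $z$-axis remains inaccessible). Once this geometric picture is in place, both the computational argument via $M_3$ and the conceptual focus-focus-versus-pinched-cylinder argument agree, and triviality of the Hamiltonian monodromy along $\gamma_3$ follows.
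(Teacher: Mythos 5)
Your proof is correct, but it takes a different route from the one the paper actually writes down for $\gamma_3$. The paper's argument is to exhibit two globally smooth, commuting action variables $I_\varphi$ and $I'_\eta$ (with $I'_\eta = I_\eta - 2l$ for $l<0$ correcting the jump of $\partial_l I_\eta$ across $l=0$) whose Hamiltonian flows, together with that of $H$, generate a free $\mathbb T^2\times\mathbb R$ action on $F^{-1}(\gamma_3)$; the bundle is then principal, hence trivial over the circle $\gamma_3$. You instead observe that the Hamiltonian monodromy is the restriction of the scattering monodromy of $E_3\to\gamma_3$ to the subgroup of $H_1$ spanned by the compact cycles $c_\eta, c_\varphi$, and read triviality off the second and third rows of $M_3$ from Theorem~\ref{theorem/main}; the nontrivial entry of $M_3$ lives entirely in the non-compact $c_\xi$ direction, which is invisible to $H_1(T^2\times\mathbb R)$. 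This is legitimate --- it is exactly the ``follows from the proof of Theorem~\ref{theorem/main}'' shortcut that the paper itself invokes for $\gamma_1,\gamma_2$ --- provided you note explicitly that $c_\eta$ and $c_\varphi$ lie in the original fibers, so their parallel transport in the compactified bundle coincides with parallel transport in $F^{-1}(\gamma_3)$. What each approach buys: yours is shorter and makes transparent that Hamiltonian monodromy is precisely the compact block of the scattering monodromy matrix, so no new computation is needed; the paper's principal-bundle argument is independent of the residue computation and gives the slightly stronger conclusion that the bundle itself (not merely its $H_1$-representation) is trivial. Your closing paragraph on the pinched-cylinder-times-circle structure of the critical fiber is a useful heuristic but is not needed for the computational argument and is the least rigorous part; if you want a geometric proof, the cleaner route is the paper's, via the globally defined $T^2$ action.
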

 \begin{proof}
Observe that the Hamiltonian flows of
$I_\varphi,$
$$
 I'_\eta = \left.
  \begin{cases}
    I_\eta, & \text{if } l \ge 0 \\
    I_\eta - 2l , & \text{if } l < 0.
  \end{cases} \right.
  $$
  and $H$ generate a global $\mathbb T^2 \times \mathbb R$ action on $F^{-1}(\gamma_3)$. 
  It follows that the bundle $F \colon F^{-1}(\gamma_3) \to \gamma_3$ is principal. Since $\gamma_3$ is a circle, it is also trivial. 
\end{proof}

We note that  Hamiltonian monodromy is an intrinsic invariant of the Euler problem, related to the non-trivial topology of the integral map $F$.
Interestingly, it is also present in the critical cases: 

(1) $\mu_1 = \mu_2$ (symmetric Euler problem) \cite{WaalkensDullinRichter04}, 

(2) $\mu_1$ or $\mu_2 =0$ (Kepler problem) \cite{DullinWaalkens2018} and 

(3) $\mu_1 = \mu_2 = 0$ (the free flow).\\
In the case of bound motion (1) and (2) are due to \cite{WaalkensDullinRichter04} and \cite{DullinWaalkens2018}, respectively. 
From the scattering perspective Hamiltonian monodromy is recovered if one considers the original Hamiltonian $H$ also as a reference.

\subsection{General case} \label{subsection/general_case}

Here we consider the case of of arbitrary strengths $\mu_i$. We observe that the
scattering monodromy matrices with respect to the reference Kepler Hamiltonians $H_{r_1}$ and $H_{r_2}$ are necessarily of the form
$$
 \left(\begin{array}{ccc}
    1 & 0 & m \\
    0 & 1 & n \\
    0 & 0 & 1
  \end{array}\right)
  $$
for some integers $m$ and $n$. These integers (for different choices of the strengths $\mu_i$ and the critical lines $\ell_i$) are given in Table~\ref{table}.

\begin{remark}
We note that one can compute the  monodromy matrices in the critical cases from 
the matrices found in the generic cases. Specifically, it is sufficient to consider the curves that encircle more than one critical line $\ell_i$ and multiply
the monodromy matrices found around each of these lines. For instance, the monodromy matrix around the curve $g = h$ in the free flow equals
the product of the three monodromy matrices found in (any) generic Euler problem.
\end{remark}

\begin{table}
\begin{center}
\begin{tabular}{|l|l|l|l|}
\hline
& $\gamma_1$ & $\gamma_2$  & $\gamma_3$ \\
\hline
& \multicolumn{3}{c|}{{\bf Scattering monodromy w.r.t. $H_{r_1}$}} \\
 \hline
 {\bf Generic} & \multicolumn{3}{c|}{ } \\
 \hline
$|\mu_1| \ne |\mu_2| \ne 0$  & $m = -1, n = 1$  & $m = 0, n = 1$ & $m = 1, n = 0$ \\
\hline
 {\bf Critical } & \multicolumn{3}{c|}{  } \\
 \hline
$-\mu_1 = \mu_2 < 0$ & $m = -1, n = 1$  & $m = 0, n = 1$ & $n = 1, n = 0$\\
\hline
$0< \mu_1 = \mu_2$ & \multicolumn{2}{c|}{$m = -1, n = 2$} & $m = 1, n = 0$\\
\hline
$\mu_1 = \mu_2 < 0$ & \multicolumn{2}{c|}{$m = -1, n = 2$} & $m = 1, n = 0$\\
\hline
$\mu_1 = \mu_2 = 0$ & \multicolumn{3}{c|}{$m = 0, n = 2$} \\
\hline
$0 = \mu_2 < \mu_1$ & $n = 1$  & $m = 0, n = 1$ &  $ \ \ \ \ \ \ \ \ \ \ \ \ m = 0,$\\ %{\tiny {$\gamma_3 = \gamma_1$}}'''
\hline
$\mu_1 < \mu_2 = 0$ & $m = -1, n = 1$  & \multicolumn{2}{c|}{$m = 1, n = 1$} \\
\hline
& \multicolumn{3}{c|}{{\bf Scattering monodromy w.r.t. $H_{r_2}$}} \\
 \hline
 {\bf Generic } & \multicolumn{3}{c|}{ } \\
 \hline
$|\mu_1| \ne |\mu_2| \ne 0$ & $m = 0, n = 1$  & $m = -1, n = 1$ & $m = 1, n = 0$ \\
\hline
 {\bf Critical } & \multicolumn{3}{c|}{  } \\
 \hline
$-\mu_1 = \mu_2 < 0$ & $m = 0, n = 1$  & $m = -1, n = 1$ & $m = 1, n = 0$\\
\hline
$0 = \mu_2 < \mu_1$ & $n = 1$  & $m = -1, n = 1$ &  $ \ \ \ \ \ \ \ \ \ \ \ \ m = 1,$\\
\hline
$\mu_1 < \mu_2 = 0$ & $m = 0, n = 1$  & \multicolumn{2}{c|}{$m = 0, n = 1$} \\
\hline
\end{tabular}
\end{center}
%\vspace{1mm}
\caption{Scattering monodromy, general case.} \label{table}
\end{table}

\section{Discussion} \label{sec/discussion}

In the present paper we have shown that the spatial Euler problem, alongside non-trivial Hamiltonian monodromy \cite{WaalkensDullinRichter04},
has non-trivial scattering monodromy of two different types: pure and mixed scattering monodromy.
The first type reflects the presence of a special periodic orbit --- a collision orbit that bounces between the
two centers --- and the associated   trapping trajectories. In the spatial case one can go around these trajectories and compare
the flow at infinity to an appropriately chosen Kepler problem. 
Scattering monodromy of the second type is related to the difference in dynamics of the  original and the reference systems; here in addition to scattering monodromy also Hamiltonian monodromy is present. Interestingly,
scattering monodromy of the second type survives vanishing of one of the centers:
it can be also observed in the limiting case of attractive and repulsive Kepler problems
$$H_{r_1} = \frac{1}{2}p^2 - \dfrac{\mu}{r_1} \ \mbox{ and } \ H_{r_2} = \frac{1}{2}p^2 + \dfrac{\mu}{r_2}.$$
Hamiltonian monodromy is present not only in the Kepler problem \cite{DullinWaalkens2018}, but also in the free flow.
The purely scattering monodromy is special to the genuine Euler problem;
we conjecture that this invariant is also present in the restricted three-body
problem.

\section{Acknowledgements}

We would like to thank 
Prof. Dr. A. Knauf for the useful and stimulating discussions.

\appendix

\section{Hamiltonian monodromy} \label{appendix/Hamiltonian_monodromy}

Consider an integrable Hamiltonian system 
$
F = (F_1 = H, F_2, \ldots, F_n)
$
on a $2n$-dimensional symplectic manifold $(M, \omega)$. If the fibers of the integral map $F$ are \textit{compact} and \textit{connected}, then 
according to the classical
Arnol'd-Liouville theorem \cite{Arnold1968} a tubular neighborhood of each regular fiber
is
a trivial torus bundle $D^n \times T^n$ admitting action-angle coordinates. Hence
$$
F \colon F^{-1}(R) \to R,
$$
where $R \subset \textup{image}(F)$ is the set of regular values of $F$,
is a locally trivial torus bundle.
This bundle is, however, not necessary globally trivial even from the topological viewpoint.
One geometric invariant that measures this non-triviality  was  introduced by Duistermaat in \cite{Duistermaat1980} and is called \textit{Hamiltonian monodromy}. Specifically, 
Hamiltonian monodromy is defined as a representation
\begin{align*}
  \pi_1(R,\xi_0) \to \textup{Aut}\,H_1(F^{-1}(\xi_0)) \simeq \mathrm{GL}(n, \mathbb Z)
\end{align*}
of the fundamental group $\pi_1(R,\xi_0)$ in the group of automorphisms of the integer homology group $H_1(F^{-1}(\xi_0)) \simeq \mathbb Z^n$.
Each element $\gamma \in \pi_1(R,\xi_0)$ acts via parallel transport of integer homology cycles \cite{Duistermaat1980}.

Since the pioneering work of Duistermaat,
Hamiltonian monodromy and its quantum counterpart \cite{Cushman1988, Vu-Ngoc1999} have been observed in many integrable systems of physics and mechanics. General results are known that allow to compute this invariant
in specific examples. It has been shown in \cite{Lerman1994, Matveev1996, Zung1997} that in the typical case of $n = 2$ degrees of freedom non-trivial Hamiltonian monodromy is manifested by the presence of the 
so-called \textit{focus-focus} points of the map $F$. In the case of a  global circle action Hamiltonian monodromy (and, more generally,
\textit{fractional monodromy} \cite{Nekhoroshev2006}) can be computed in terms of the singularities of the circle action  \cite{EfstathiouMartynchuk2017, Martynchuk2017}.

\begin{remark} A notion of monodromy can be defined for torus bundles that do not necessarily come from an integrable system and also in the case of bundles with non-compact fibers  
(for instance, in the case of cylinder  bundles). Specifically, consider a bundle 
$F \colon F^{-1}(\gamma) \to \gamma, \ \gamma = S^1.$
It can be obtained from a direct product 
$[0,1] \times F^{-1}(\gamma(t_0))$ by gluing the boundaries via a non-trivial homeomorphism $f$, called the \textit{monodromy} of the bundle. We call this monodromy 
 \textit{Hamiltonian} if $F$ comes from a completely integrable system. In this case
the push-forward map $f_{\star}$
coincides with the 
automorphism given by the parallel transport.

We note that non-compact fibrations appear in the Euler problem in the case of positive energies and in various other integrable systems. We  mention the works
\cite{Flashka1988, Kudryavtseva2011, Martynchuk2015} and \cite{Bates2007, DullinWaalkens2008, Zagryadskii2012, Efstathiou2017}.
For systems that are both scattering and integrable scattering monodromy and Hamiltonian monodromy  coincide if the reference   is given by the original Hamiltonian $H$.

\end{remark}

\section{Bifurcation diagrams for the planar problem}  \label{appendix/bifurcation_diagrams_for_the_planar_problem}
In this section we give \textit{bifurcation diagrams} of the planar Euler problem in the case of arbitrary strengths $\mu_i$. The computation has been performed in 
Section~\ref{sec/bifurcationdiagrams}; more details can be found in \cite{Deprit1962, WaalkensDullinRichter04, Seri2015}.

\begin{figure}[ht]
	\begin{center}
	\vspace{-1mm}
		\includegraphics[width=\linewidth]{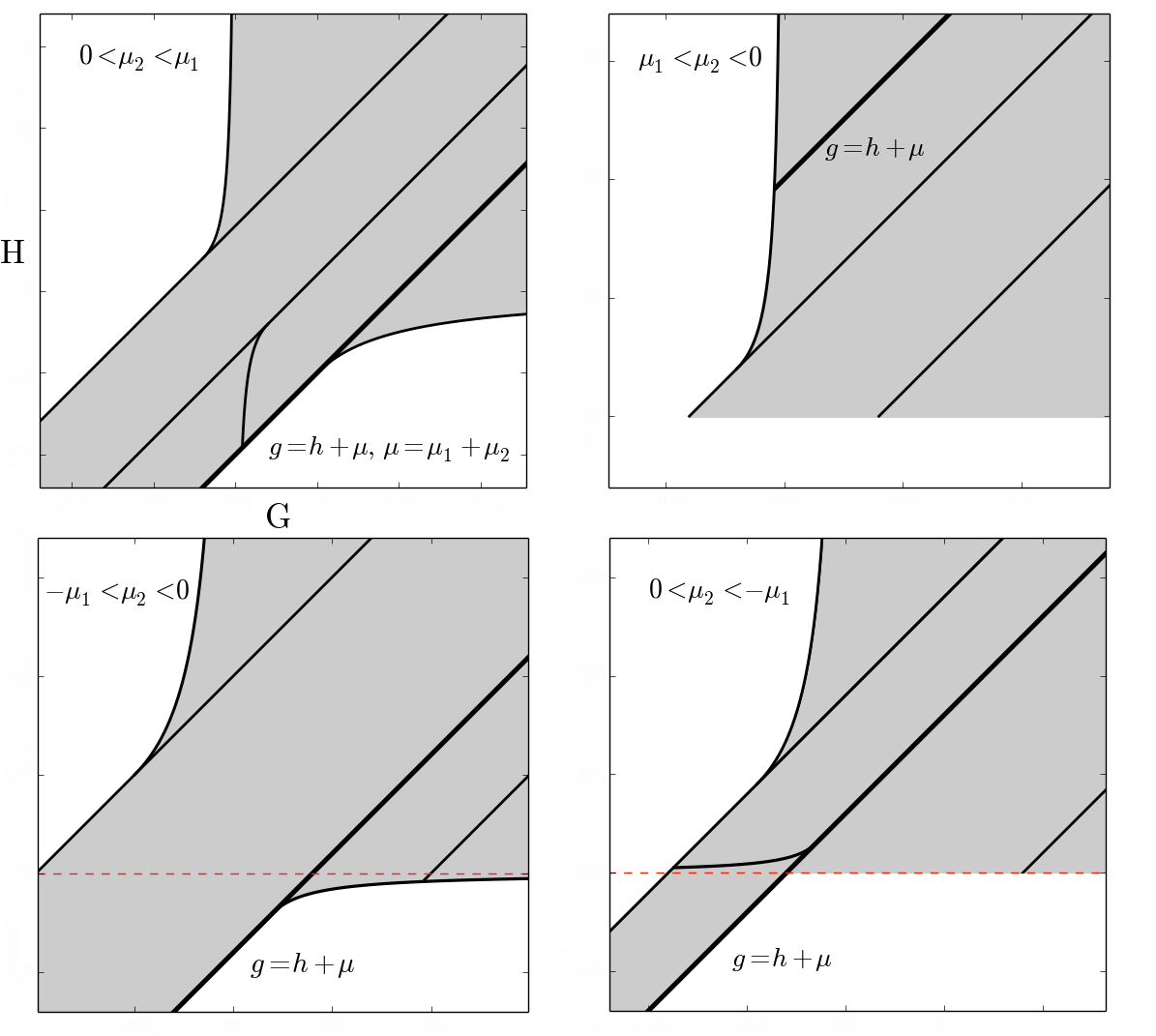}
	\end{center}
	\vspace{-3mm}
	\caption{Bifurcation diagrams for the planar problem, generic cases
	$|\mu_1| \ne |\mu_2| \ne 0$.  Top: attractive (left), repulsive (right). Bottom: mixed. } 
	\label{bifd_planar_generic_case}
\end{figure}
\begin{figure}[ht]
	\begin{center}
		\includegraphics[width=1\linewidth]{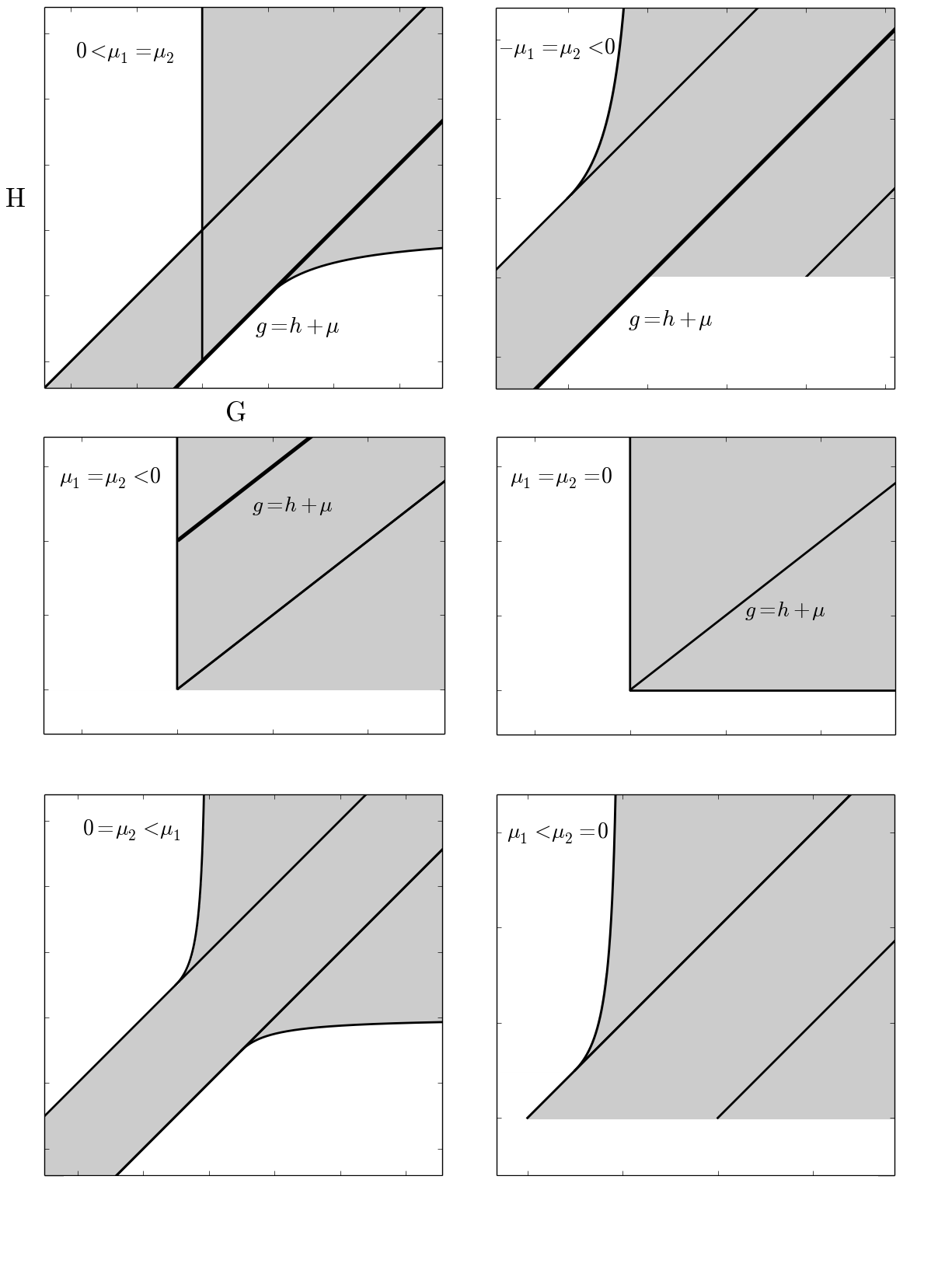}
	\end{center}
	\vspace{-0.32cm}
	\caption{Bifurcation diagrams for the planar problem, non-generic cases 
	$|\mu_1| = |\mu_2|$ or $\mu_1 \mu_2 = 0$. From left to right, from top to bottom:
	symmetric attractive, anti-symmetric, symmetric repulsive, free flow, attractive Kepler problem, repulsive Kepler problem.}
	\label{bifd_planar_critical_case}
\end{figure}

The computation of Section~\ref{sec/bifurcationdiagrams} yields the following critical
lines 
\begin{multline} \label{eq/critical_lines_planar}
\ell_1 = \{g = h+\mu_2-\mu_1\}, \ \ell_2 = \{g = h+\mu_1-\mu_2\} \\  \mbox{ and } \  \ell_3  = \{g = h+\mu\}, \ \mu = \mu_1+\mu_2,
\end{multline}
and the critical curves
\begin{align*}\{g &= \mu \cosh\lambda/2, \ h =  -\mu /2\cosh\lambda\}, \\
\{g &= (\mu_1 - \mu_2) \sin \nu/2, \ h = (\mu_2 - \mu_1)/2 \sin \nu\}.
\end{align*}
Points that do not correspond to any physical motion must be removed from the obtained set. 
The resulting diagrams are given in 
Figs.~\ref{bifd_planar_generic_case} and \ref{bifd_planar_critical_case}. 
Here
we distinguish two cases: generic case when the strengths
$|\mu_1| \ne |\mu_2| \ne 0$  and the remaining critical cases.

We note that the critical cases occur when $|\mu_1| = |\mu_2|$ or when $\mu_1 \mu_2 = 0$. In the case $\mu_1 = -\mu_2 \ne 0$ the attraction of one of the centers equalizes the repulsion of the other center, making the bifurcation diagram qualitatively different from 
the cases when $-\mu_1 < \mu_2 < 0$ or $0 < \mu_2 < - \mu_1$. However, we still have the three different critical lines 
$\ell_1, \ell_2$ and $\ell_3$. In the other critical cases collisions of the critical lines $\ell_i$ occur. For instance, 
$\mu_1 = 0$ implies that $\ell_1 = \ell_3$ and so on. The same situation takes place in the spatial problem.

\section{Proof of Theorem~\ref{theorem/ref}} \label{appendix/proof_reference_systems}
We shall show that the Euler  problem has two natural reference Hamiltonians when $\mu_1 \ne \mu_2$ and one otherwise.
\begin{theorem} 
Among all Kepler Hamiltonians only
$$H_{r_1} = \frac{1}{2}p^2 - \dfrac{\mu_1-\mu_2}{r_1} \ \mbox{ and } \ H_{r_2} = \frac{1}{2}p^2 - \dfrac{\mu_2-\mu_1}{r_2}$$
are reference Hamiltonians of $F = (H,L_z,G)$. In particular, the free Hamiltonian is a reference Hamiltonian of $F$ only in the case $\mu_1 = \mu_2$. 
\end{theorem}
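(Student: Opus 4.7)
My strategy is to impose the defining condition
\[F(\lim\nolimits_{t\to+\infty} g^t_{H_r}(x)) = F(\lim\nolimits_{t\to-\infty} g^t_{H_r}(x))\]
one component of $F = (H, L_z, G)$ at a time, for a general Kepler candidate $H_r(q,p) = \tfrac12|p|^2 - \nu/|q-o|$ with center $o \in \mathbb R^3$ and strength $\nu \in \mathbb R$. Write $v^\pm := \hat p^\pm$ for the asymptotic momenta of a scattering $H_r$-trajectory; conservation of $H_r$ gives $|v^+|=|v^-|=:|v|$. The $H$-component is automatic: $V(q), \nu/|q-o| \to 0$ at infinity implies $\lim_{t\to\pm\infty} H = H_r(x)$ at both ends. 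The $L_z$-component forces the center $o$ to lie on the $z$-axis, since $L_z$ is preserved by $H_r$ if and only if $H_r$ is invariant under rotations about the $z$-axis; so write $o = (0,0,z_0)$.

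The nontrivial step concerns $G$. Using the asymptotic expansion $q(t) \sim v^\pm t$ as $t\to\pm\infty$ together with the two kinematic facts
\[L = q\times p \to o\times v^\pm + J, \qquad (z\pm a)/r_i \to \operatorname{sgn}(t)\,v^\pm_z/|v^\pm|\]
(where $J := (q-o)\times p$ is the $H_r$-conserved Kepler angular momentum about $o$, and the sign in the direction cosine comes from the fact that the scattered particle leaves towards spatial infinity along $v^+$ at $t\to +\infty$ but along $-v^-$ at $t\to-\infty$), a direct computation yields
\[G|_{+\infty} - G|_{-\infty} = -\tfrac12(z_0^2 - a^2)\delta_z\sigma_z + z_0(\delta\times J)_z + a(\mu_1-\mu_2)\sigma_z/|v|,\]
with $\delta := v^+ - v^-$ and $\sigma := v^+ + v^-$. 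Conservation of the Laplace-Runge-Lenz vector $A = p\times J - \nu(q-o)/|q-o|$ along $H_r$, evaluated asymptotically at $\pm\infty$, supplies the Kepler identity $\delta\times J = \nu\sigma/|v|$; in particular $(\delta\times J)_z = \nu\sigma_z/|v|$, and the jump simplifies to
\[\sigma_z\Bigl[-\tfrac12(z_0^2 - a^2)\delta_z + (z_0\nu + a(\mu_1-\mu_2))/|v|\Bigr].\]
Because one can exhibit scattering trajectories with $\sigma_z\neq 0$ in which $\delta_z$ and $|v|$ vary independently (by changing the orbital-plane inclination, the impact parameter and the energy), the bracketed expression, viewed as a linear function of $\delta_z$ with parameter $|v|$, must vanish identically, forcing
\[z_0^2 = a^2 \qquad \text{and} \qquad z_0\nu + a(\mu_1-\mu_2) = 0.\]
The two solutions are $(z_0,\nu) = (-a,\mu_1-\mu_2)$ and $(z_0,\nu) = (a,\mu_2-\mu_1)$, i.e.\ exactly $H_{r_1}$ and $H_{r_2}$; specialising to $\nu = 0$ only the constraint $a(\mu_1-\mu_2)=0$ survives, so the free Hamiltonian is a reference iff $\mu_1=\mu_2$.

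For sufficiency, rather than redo the asymptotic calculation, I would observe that substituting $(\mu_1,\mu_2) \mapsto (\mu_1-\mu_2, 0)$ in the formula for $G$ produces a function $\widetilde G$ which, by the ellipsoidal separation recalled in Subsection~\ref{sec/separation_and_integrability}, is a first integral of $H_{r_1}$. A short computation in ellipsoidal coordinates $(\xi,\eta,\varphi)$ shows that $G - \widetilde G$ is of order $1/\xi$, so it vanishes as $|q|\to\infty$; consequently both asymptotic limits of $G$ along any $H_{r_1}$-trajectory coincide with the $H_{r_1}$-invariant value $\widetilde G(x)$, confirming that $H_{r_1}$ is a reference Hamiltonian. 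The case $H_{r_2}$ is obtained by the symmetric substitution $(\mu_1,\mu_2) \mapsto (0, \mu_2-\mu_1)$.

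The main technical pitfall is the sign bookkeeping in $(z\pm a)/r_i \to \operatorname{sgn}(t)\,v^\pm_z/|v^\pm|$: overlooking the sign reversal at $t\to-\infty$ eliminates the factor $\sigma_z$ from the jump and yields spurious necessity conditions. The other substantive step is recognising the LRL identity $\delta \times J = \nu\sigma/|v|$ as the precise kinematic relation linking the asymptotic data $(v^\pm, J)$ along Kepler asymptotes; it is what allows the three a priori independent quantities $\delta_z$, $\sigma_z$ and $(\delta\times J)_z$ to be reorganised so that the constraint factorises, separating the ``center'' information ($z_0 = \pm a$) from the ``strength'' information ($\nu = \pm(\mu_1-\mu_2)$).
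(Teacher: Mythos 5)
Your proof is correct, and for the necessity part it takes a genuinely different route from the paper's. The paper (Appendix~\ref{appendix/proof_reference_systems}) splits into cases according to the sign of the Kepler strength and tests the reference condition against three hand-picked trajectories --- one lying in a plane $z=z_0$ to force the centre onto the axis, and two with asymptotes nearly parallel to the $z$-axis to pin down first the product $b\mu$ and then $b=a$ --- evaluating the limits of $L_z$ and of $G-G_r$ geometrically along those asymptotes, with the repulsive case handled separately by passing to the other branch of the hyperbola and the free case by yet another explicit trajectory. You instead derive one closed-form expression for the asymptotic jump of $G$, valid for every Kepler candidate, every sign of $\nu$ and every scattering trajectory, and then invoke the Laplace--Runge--Lenz identity $\delta\times J=\nu\sigma/|v|$ to factor out $\sigma_z$, so that the constraint cleanly separates into $z_0^2=a^2$ and $z_0\nu+a(\mu_1-\mu_2)=0$; attractive, repulsive and free cases are dispatched uniformly, and it becomes transparent exactly which asymptotic data the reference condition constrains. (I checked the jump formula and the LRL step; both are right, including the $\operatorname{sgn}(t)$ bookkeeping and the cancellation $|v^{+}_\perp|^2-|v^{-}_\perp|^2=-\delta_z\sigma_z$.) Your sufficiency argument via the separation constant $\widetilde G=G_{r_1}$ and the decay $G-\widetilde G=O(1/\xi)$ is essentially the paper's. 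Two small points to tighten: (i) your justification of the on-axis condition (``$L_z$ is preserved iff $H_r$ is rotation-invariant'') conflates exact conservation along the flow with equality of the two asymptotic limits; the clean statement in your own framework is that the jump of $L_z$ equals $o_x\delta_y-o_y\delta_x$, which is nonzero for suitable trajectories whenever $(o_x,o_y)\neq(0,0)$ and $\nu\neq0$ (for $\nu=0$ the centre is immaterial), exactly what the paper's planar test trajectory witnesses; (ii) the assertion that $\delta_z$ can take at least two distinct values with $\sigma_z\neq0$ at fixed energy deserves a one-line verification, e.g.\ by varying the impact parameter of orbits in a plane containing the $z$-axis. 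Neither point is a gap in substance.
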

\begin{proof} \textit{Sufficiency.} 
Consider the Hamiltonian $H_{r_1}$.  Let
 \begin{equation*}
G_{r_1} = H_{r_1} + \frac{1}{2}(L^2-a^2(p_x^2+p_y^2)) + a(z+a)\frac{\mu_1-\mu_2}{r_1}.
\end{equation*}
From Section~\ref{sec/separation_and_integrability} (see also Eq.~\eqref{eq/thirdintegral}) it follows that the functions
$H_{r_1}, L_z$ and $G_{r_1}$ Poisson commute. This implies that any trajectory
$g^t_{H_{r_1}}(x)$ belongs to the common level 
set of
$F_{r_1} = (H_{r_1}, L_z, G_{r_1}).$
For a scattering trajectory we thus get
 $$
 F_{r_1}\left(\lim\limits_{t\to+\infty} g^t_{H_{r_1}}(x)\right) = F_{r_1}\left(\lim\limits_{t\to-\infty}g^t_{H_{r_1}}(x)\right).
 $$
A straightforward computation of the limit shows that also
$$
F\left(\lim\limits_{t\to+\infty} g^t_{H_{r_1}}(x)\right) = F\left(\lim\limits_{t\to-\infty} g^t_{H_{r_1}}(x)\right).
$$
The case of $H_{r_2}$ is completely analogous.

\begin{figure}[ht]
\begin{center}
\includegraphics[width=\linewidth]{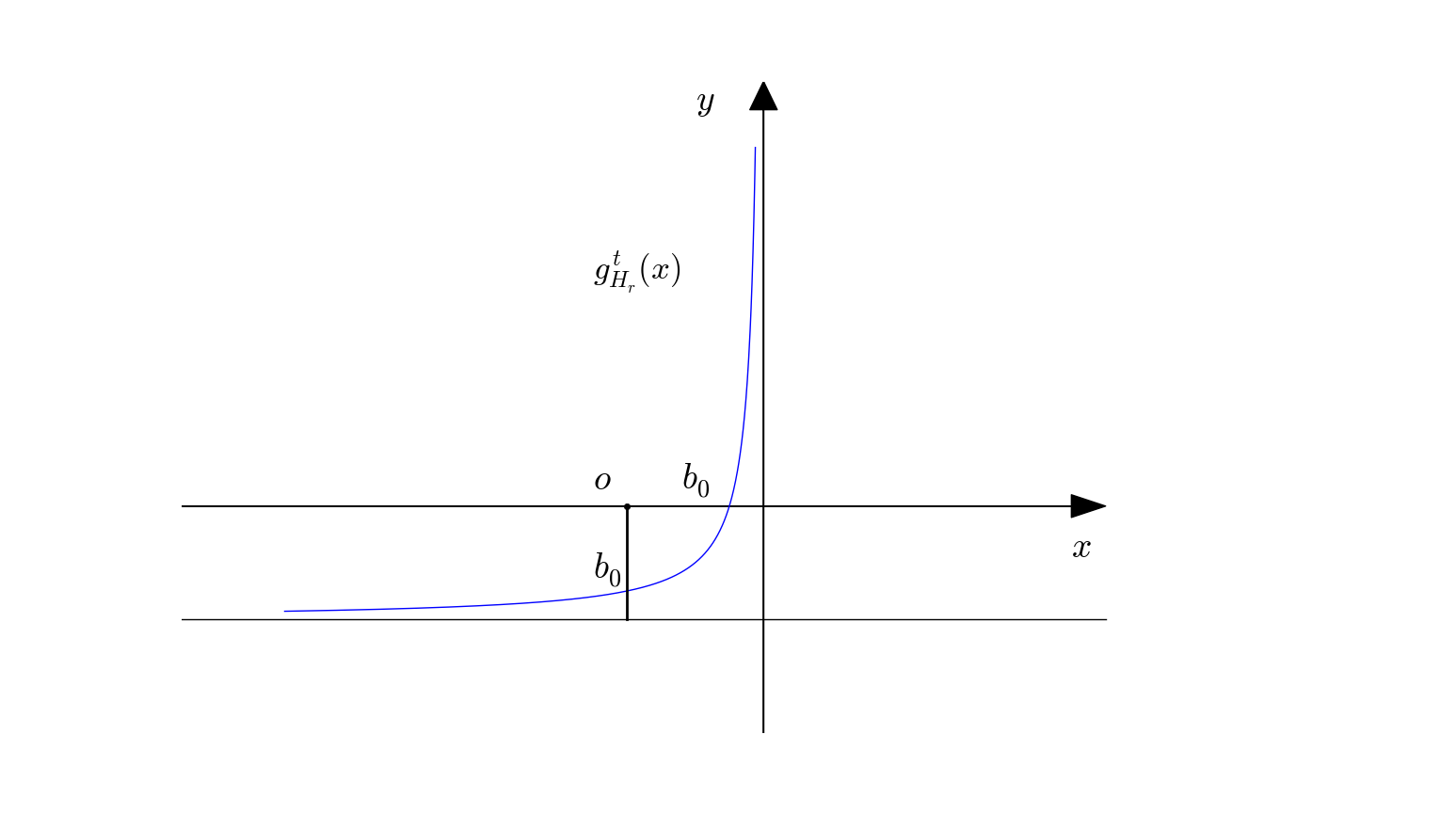}
\end{center}
\caption{Kepler trajectory $g^t_{H_r}(x)$ in the $z = z_0$ plane.  } 
\label{keplerxy}
\end{figure}

\textit{Necessity.}
Without loss of generality  $\mu_2 \le \mu_1$. Let
$$
H_r = \frac{1}{2}p^2 - \dfrac{\mu}{r},
$$
where $r \colon \mathbb R^3\setminus \{o\} \to \mathbb R$ is the distance to some point $o \in \mathbb R^3$, be a reference Hamiltonian of $F$. 
We have to show that
\begin{itemize}
 \item[{\bf 1.}] $\mu > 0$ implies $o = o_1$ and $\mu = \mu_1-\mu_2$;
 \item[{\bf 2.}] $\mu < 0$ implies $o = o_2$ and $\mu = \mu_2-\mu_1$;
 \item[{\bf 3.}] $\mu = 0$ implies $\mu_1 = \mu_2$.
\end{itemize}

{\bf Case 1.}
First we show that $o$ belongs to the $z$ axis. If this is not the case, then, due to rotational symmetry,
we have a reference Hamiltonian $H_r$ with $o = (-b_0,0,z_0)$ for some $b_0, \ z_0 \in \mathbb R, \ b_0 \ne 0$. This reference Hamiltonian $H_r$ has a trajectory 
$t \mapsto g^t_{H_{r}}(x)$ that (in the configuration space) has the form shown in
Figure~\ref{keplerxy}. But for such a trajectory
$$
L_z\left(\lim\limits_{t\to+\infty} g^t_{H_{r_1}}(x)\right) = 0 \ne \sqrt{2h} \cdot b_0 = L_z\left(\lim\limits_{t\to-\infty} g^t_{H_{r_1}}(x)\right),
$$
where $h = H_r(x) > 0$ is the energy of $g^t_{H_r}(x)$. We conclude that $o = (0,0,b)$ for some $b \in \mathbb R$.

Next we show that $b\mu = a(\mu_1-\mu_2)$. Consider a trajectory  $g^t_{H_r}(x)$ of $H_r$ that has the form shown in
Figure~\ref{keplerxz}$a$. 
\begin{figure}[ht]
\begin{center}
\includegraphics[width=1\linewidth]{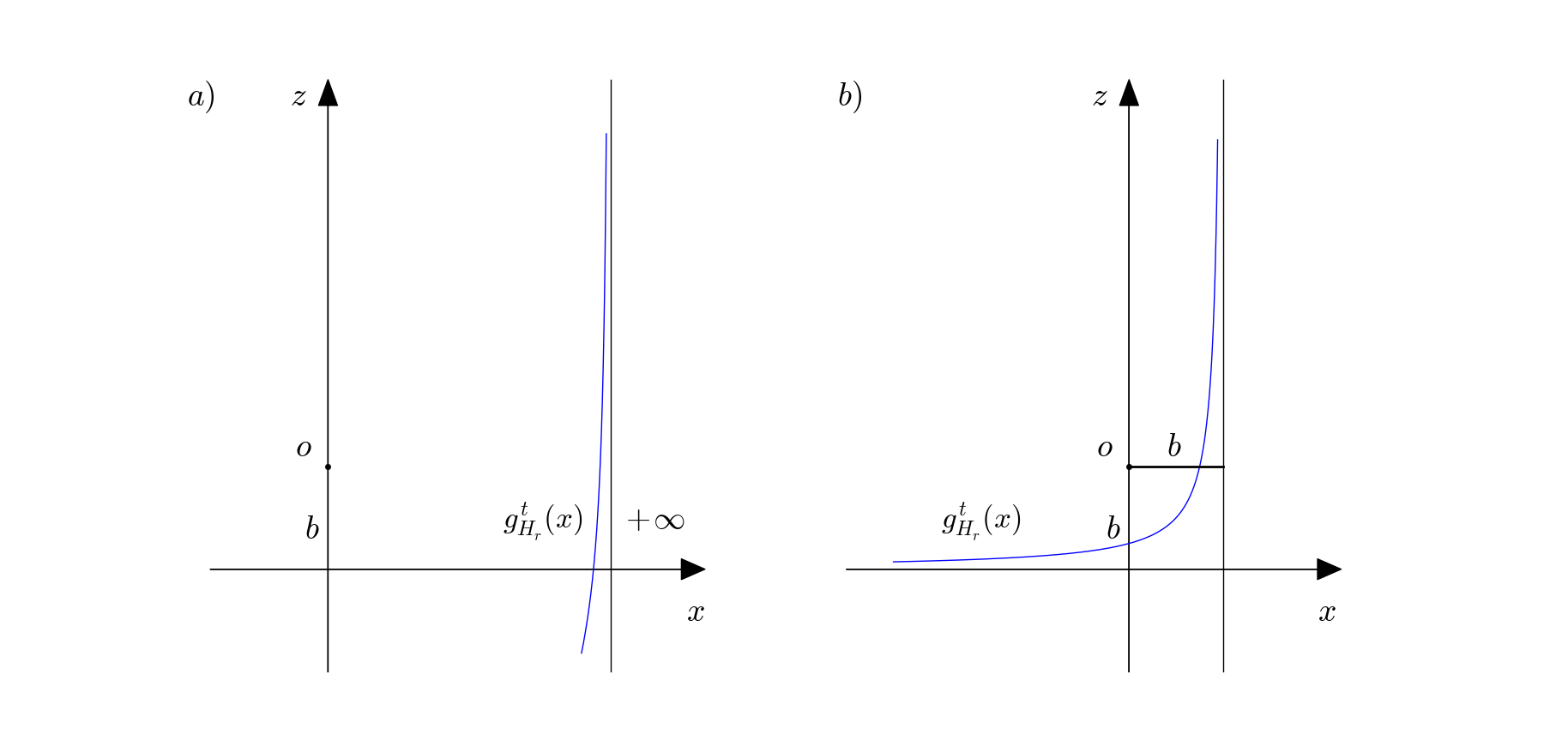}
\end{center}
\caption{Kepler trajectories in the $y = 0$ plane.  } 
\label{keplerxz}
\end{figure}
It follows from Eq.~\eqref{eq/thirdintegral} that the function
 \begin{equation*}
G_{r} = H_{r} + \frac{1}{2}(L^2-b^2(p_x^2+p_y^2)) + b(z+b)\frac{\mu}{r}
\end{equation*}
is constant along this trajectory. Thus, for $H_r$ to be a reference Hamiltonian we must have
\begin{equation} \label{eq/gmingr}
(G-G_r)\left(\lim\limits_{t\to+\infty} g^t_{H_{r_1}}(x)\right) = (G-G_r)\left(\lim\limits_{t\to-\infty} g^t_{H_{r_1}}(x)\right).
\end{equation}
In the configuration space, $g^t_{H_r}(x)$ is asymptotic
to the ray $x = c, \ y = 0, \ z \ge 0$ at  $t = +\infty$. The other asymptote at $t = -\infty$ 
gets arbitrarily close to the ray $x = c, \ y = 0, \ z \le 0$ when $c \to +\infty$. It follows that Eq.~\eqref{eq/gmingr} is equivalent to
$$
a(\mu_1 - \mu_2) - b \mu = b\mu - a(\mu_1 - \mu_2) + \varepsilon,
$$
where $\varepsilon \to 0$ when $c \to +\infty$.

The remaining equality $b = a$ can be proven using a trajectory  $g^t_{H_r}(x)$ that has the form shown in
Figure~\ref{keplerxz}$b$. 

{\bf Case 2.} In this
case  trajectories $g^t_{H_r}(x)$ 
of the repulsive Kepler Hamiltonian $H_r$ 
do not project to the
curves shown in Figs.~\ref{keplerxy}, \ref{keplerxz}$a$
and \ref{keplerxz}$b$. However, each of these curves is a branch of a hyperbola. The `complementary' branches are  (projections of) 
trajectories of $H_r$; see Fig.~\ref{keplerxy2}. If the latter branches are used, the proof becomes similar to {\bf Case 1}.
\begin{figure}[ht]
\begin{center}
\includegraphics[width=\linewidth]{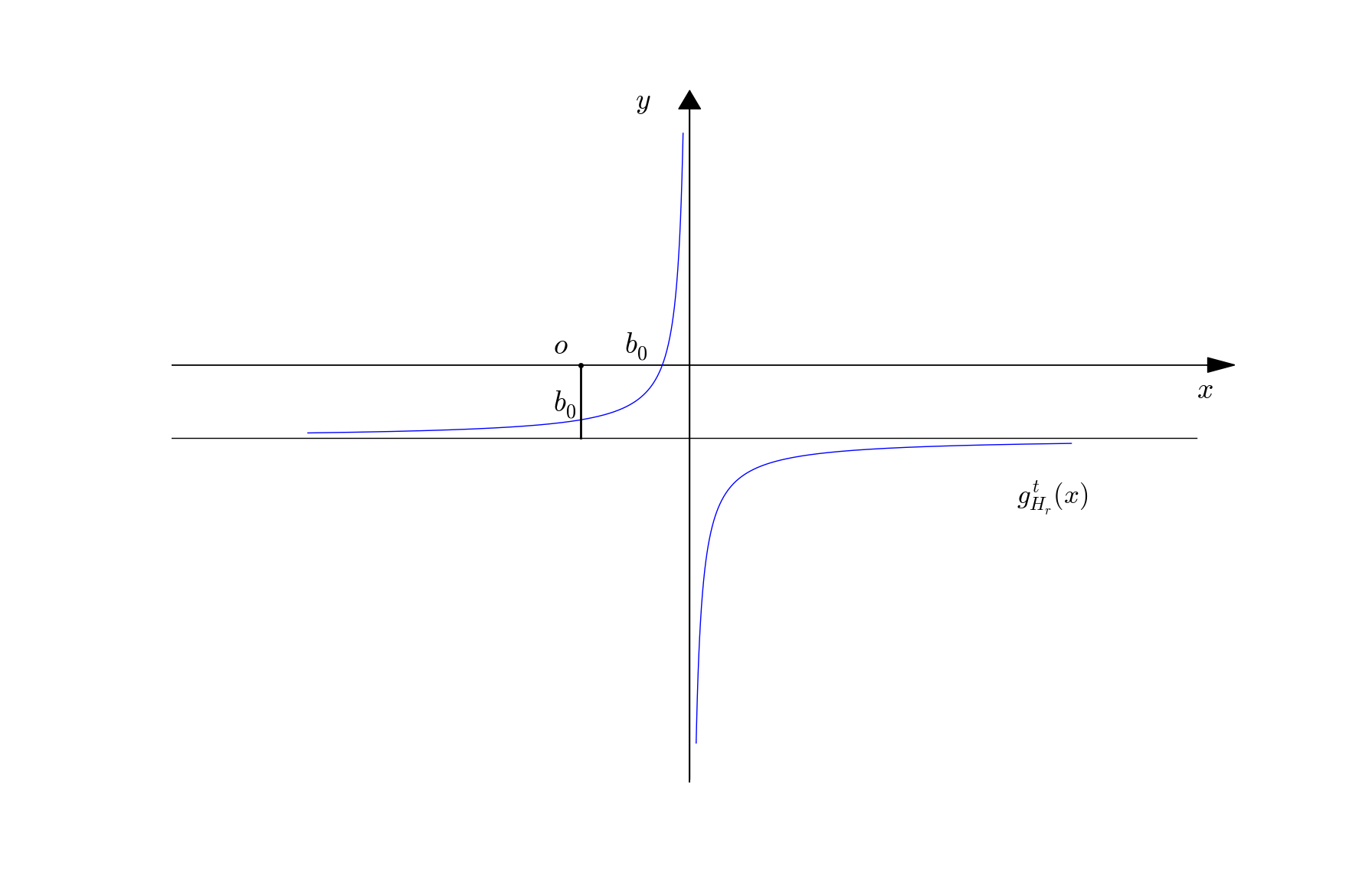}
\end{center}
\caption{The two branches ($z = z_0$ plane). In the repulsive case $\mu > 0$ a Kepler trajectory is represented by the convex branch.  } 
\label{keplerxy2}
\end{figure}

{\bf Case 3.} In this case $H_r$ generates the free motion. Let
$$g^t_{H_r}(x) = (q(t),p(t)), \  q(t) = (c,0,t), \ p(t) = (0,0,1).$$
Since $L^2$ and $(p_x,p_y,p_z)$ are conserved,
\begin{equation*}
G\left(\lim\limits_{t\to+\infty} g^t_{H_{r_1}}(x)\right) = G\left(\lim\limits_{t\to-\infty} g^t_{H_{r_1}}(x)\right)
\end{equation*}
implies $a(\mu_1-\mu_2) = a(\mu_2-\mu_1)$ and hence $\mu_1 = \mu_2$.
\end{proof}

\bibliographystyle{amsplain}
\bibliography{library}

\end{document}